\tikzset{
    %Define standard arrow tip
    >=stealth',
    %Define style for boxes
    punkt/.style={
           rectangle,
           rounded corners,
           draw=black, very thick,
           text width=9.5em,
           minimum height=2em,
           text centered},
    imbox/.style={
           text width=11.5em,
           minimum height=2em,
           text centered},
    % Define arrow style
    pil/.style={
           ->,
           thick,
           shorten <=2pt,
           shorten >=2pt,},
    pil2/.style={
           -,
           thick,
           shorten <=2pt,
           shorten >=2pt,},
    pil3/.style={
           <-,
           thick,
           shorten <=2pt,
           shorten >=2pt,}
}
\numberwithin{equation}{section}
\theoremstyle{plain}
\newtheorem{thm}{Theorem}[section]
\newtheorem{dfn}[thm]{Definition}
\newtheorem{prop}[thm]{Proposition}
\newtheorem{lem}[thm]{Lemma}
\theoremstyle{definition}
\newtheorem{rem}[thm]{Remark} 
\newtheorem{remark}[thm]{Remark} 
\newtheorem{example}[thm]{Example}
\def\be{\begin{eqnarray}}
\def\ee{\end{eqnarray}}
\def\b*{\begin{eqnarray*}}
\def\e*{\end{eqnarray*}}
\def\bee{\begin{equation}}
\def\eee{\end{equation}}
\DeclareMathOperator*{\argmax}{arg \, max}
\newcommand{\cA}{\mathcal{A}}
\newcommand{\cF}{\mathcal{F}}
\newcommand{\cH}{\mathcal{H}}
\newcommand{\cK}{\mathcal{K}}
\newcommand{\cL}{\mathcal{L}}
 \newcommand{\cS}{\mathcal{S}}
\newcommand{\ignore}[1]{}
\def \a{\alpha}
\def \b{\beta}
\def \e{\epsilon}
\def \d{\delta}
\def \E{\mathbb{E}}
\def \L{\mathbb{L}}
 \def\P{\mathbb{P}}
\DeclareMathAlphabet\scr{U}{scr}{m}{n}
\SetMathAlphabet\scr{bold}{U}{scr}{b}{n}
  \DeclareFontFamily{U}{scr}{\skewchar\font'177}%
  \DeclareFontShape{U}{scr}{m}{n}{<-6>rsfs5<6-8>rsfs7<8->rsfs10}{}%
  \DeclareFontShape{U}{scr}{b}{n}{<-6>rsfs5<6-8>rsfs7<8->rsfs10}{}%
\def\1{\mathbf{1}}
\begin{document}

\title{Liquidity in Competitive Dealer Markets\thanks{The authors thank Jan Kallsen and Roel Oomen for fruitful discussions. The extremely helpful suggestions of two anonymous referees are also gratefully acknowledged.}}

\author{Peter Bank\thanks{Technische Universit\"at Berlin, Institut f\"ur Mathematik, Stra{\ss}e des 17. Juni 136, 10623 Berlin, Germany, email \texttt{bank@math.tu-berlin.de}.}
\and
Ibrahim Ekren\thanks{Florida State University, Department of Mathematics, 1017 Academic Way, Tallahassee, FL 32306, email \url{iekren@fsu.edu}.}
\and
Johannes Muhle-Karbe\thanks{Imperial College London, Department of Mathematics, London SW71NE, UK, email \url{j.muhle-karbe@imperial.ac.uk}.  Research supported by the CFM-Imperial Institute of Quantitative Finance. Parts of this paper were written while this author was visiting the Forschungsinstitut f\"ur Mathematik at ETH Z\"urich.}
}

\date{March 2, 2021}

\maketitle

\begin{abstract}
We study a continuous-time version of the intermediation model of Grossman and Miller~\cite{grossman.miller.88}. To wit, we solve for the competitive equilibrium prices at which liquidity takers' demands are absorbed by dealers with quadratic inventory costs, who can in turn gradually transfer these positions to an exogenous open market with finite liquidity. This endogenously leads to transient price impact in the dealer market. Smooth, diffusive, and discrete trades all incur finite but nontrivial liquidity costs, and can arise naturally from the liquidity takers' optimization.
\end{abstract}

\bigskip
\noindent\textbf{Mathematics Subject Classification: (2010)} 91B26, 91B24, 91B51.

\bigskip
\noindent\textbf{JEL Classification:}  C68, D43, G12.

\bigskip
\noindent\textbf{Keywords:} dealer market, segmentation, dynamic equilibrium, endogenous liquidity.

\textbf{}

\section{Introduction}

A basic paradigm of classical financial theory is that markets are
``perfectly liquid'', in that arbitrary amounts can be traded
immediately at the present market price. Yet, in reality liquidity is limited and generated by the interplay of strategic liquidity providers and consumers. This paper studies liquidity formation in a tractable equilibrium model, where intermediaries dynamically transfer liquidity from one market segment to another. 

More specifically, we consider a risky asset that is traded in two
markets. In the first market segment (the \emph{open market}), prices
and liquidity costs are given exogenously as in the optimal execution
literature following Almgren and Chriss~\cite{almgren.chriss.01}. In
the second market segment (the \emph{dealer market}), agents trade the
asset competitively. Some of the agents (who we will refer to as
\emph{clients}) have exogenously given trading needs, and are therefore willing to
pay a premium for the immediacy provided by the other agents (who we
will call \emph{dealers}).\footnote{Unlike the ``market makers'' in
  \cite{ho.stoll.81,kyle.85} and many more recent studies, these
  liquidity providers are not obliged to absorb any order flow, but
  trade at their discretion. Investment banks providing liquidity in
  foreign exchange markets are a typical example, compare
  \cite{butz.oomen.17}.}  The equilibrium price at which this second
market clears is in turn driven by the agents' aggregate demand as in
the model of Garleanu, Pedersen, and
Pothesman~\cite{garleanu.al.09}. The present study analyzes how agents
with access to both markets dynamically intermediate between
them. This extends the classical one-shot intermediation model of
Grossman and Miller~\cite{grossman.miller.88} to continuous time,
where inventory management and in turn equilibrium prices reflect a
tradeoff between past, present, and (expectations of) future order
flow. Our paper also contributes to the literature on segmented markets~\cite{gromb.vayanos.02,pavlova.rigobon.08,gromb.vayanos.10},
by analyzing a model where not only the positions of intermediaries
in the ``peripheral'' dealer market are constrained, but also position
adjustments in the ``central'' open market are costly.

Our stylized model is motivated by liquidity provision of competitive dealers in over-the-counter markets. As succinctly summarized by \cite{butz.oomen.17} in the context of foreign exchange markets,

\begin{itemize}
\item[] \emph{``Dealers in over-the-counter financial markets provide liquidity to customers on a principal basis and manage the risk position that arises out of this activity in one of two ways. They may internalize a customer's trade by warehousing the risk in anticipation of future offsetting flow, or they can externalize the trade by hedging it out in the open market.\footnote{This refers to inter-dealer broker platforms or public electronic crossing networks, for example.} [..] The notion that dealers are either perfect internalizers or perfect externalizers is of course too constraining and in practice they will use to varying extent a mix of both to manage their risk.''
}
\end{itemize}

In our model, pure internalizers only trade in the dealer market,
whereas externalization corresponds to offsetting trades in the open
market. Our dynamic model in turn allows to study the optimal tradeoff
between both risk management strategies. In order to permit the
analysis of this complex interaction between clients, dealers, and the
open market, we make a number of simplifying assumptions. First, all
agents act as price takers in the dealer market as
in~\cite{garleanu.al.09}. This means that the equilibrium price is
``competitive'', which is reasonable for a large number of small
liquidity providers, whose individual actions only have negligible
effects on the overall market equilibrium.\footnote{Evidently, a very
  important but challenging direction for further research is to study
  extensions to a game-theoretic setting, where dealers dynamically
  adjust their price quotes to account for their inventories while
  competing with each other for the clients' order flow. For
  liquidation problems, a first step in this direction is undertaken
  in \cite{capponi.al.18}, where dealers strategically quote bid-ask
  spreads for randomly arriving clients with exogenous demand curves,
  and a strategic client who needs to liquidate a large
  position. In general, however, non-competitive equilibrium prices can
  be formalized in many different ways and will support many different price
  dynamics
  (cf., e.g., \cite{sannikov.skrzypacz.16,choi.al.19}).} Second,
agents have quadratic inventory costs rather than preferences modelled
by concave utility functions. Such quadratic holding costs are also
used in
\cite{rosu.16,sannikov.skrzypacz.16,choi.al.19,muhlekarbe.webster.17,nutz.scheinkman.17},
for example, because this reduced-form modelling of ``inventory
aversion'' is considerably more tractable than risk aversion, yet
still penalizes the accumulation of large and thereby risky
positions. Third, the exogenous price in the open market has
martingale dynamics, complemented by quadratic trading costs on the
total and individual order flows. The martingale assumption, also made
in \cite{grossman.miller.88,garleanu.pedersen.16}, ensures that agents
do not speculate in the open market, but instead purely focus on
intermediation, in line with the empirical observation that ``specialists are good short-term traders but undistinguished long-term speculators''~\cite{hasbrouck.sofianos.93}.
Quadratic costs on the trading rate as
in~\cite{almgren.chriss.01,garleanu.pedersen.16} penalize trading
speed, modelling that positions can only be unwound gradually in the
open market. The individual trading costs allow to distinguish how
easy it is for different agents to access the open market; the
benchmark example is that dealers have low or no access costs whereas
retail clients have no direct access at all. The trading cost on the
total order flow reflects that it becomes more difficult to, for
instance, unwind a position in the open market when others are trying
to do the same. For finitely many agents, this introduces a negative
externality, where agents internalize the effects their trades have on
their own execution prices but not on others. In order to ensure
consistency with the competitive dealer market, we therefore identify
the limiting cases of our results for a large number of identical small
dealers. We find that the limiting model is similar to that of a representative
dealer with suitably aggregated parameters.

In the setting outlined above, our main result,
Theorem~\ref{thm:Nash}, identifies the unique equilibrium price in the
dealer market as the solution of a linear forward-backward stochastic
differential equation (FBSDE). Its explicit solution in terms of
hyperbolic functions and conditional expectations of the agents'
trading targets in turn reveals how liquidity is mitigated between the
dealer market and the open market, and how this affects prices and optimal trading strategies.  For the simplest case of an inelastic client order flow (i.e., only ``noise trades'') in the dealer market, we have
\begin{equation}\label{eq:price1}
\begin{split}
\mbox{Equilibrium Price } =& \mbox{ Fundamental Price }\\
 &\qquad+ \mbox{Holding Cost} \times \mbox{Expected Future Inventory.}
\end{split}
\end{equation}
This adjustment is consistent with the small-risk aversion limit
obtained for the model of \cite{garleanu.al.09}
in~\cite{kramkov.pulido.16a,kramkov.pulido.16}. However, whereas the
price impact is \emph{permanent} in these models without access to an
open market, it becomes \emph{transient} in our model, a feature of
price dynamics also documented in the empirical literature (cf., e.g., \cite{hasbrouck.sofianos.93} and the references therein). The reason is
that the client order flow can be gradually passed on to the open
market here, so that the expected future demand in the formula
of~\cite{kramkov.pulido.16} is replaced by the optimally controlled
inventory above. Put differently, the dealers in \cite{garleanu.al.09,kramkov.pulido.16,kramkov.pulido.16a} correspond to ``pure internalizers'' in the terminology of \cite{butz.oomen.17}, whereas our dealers employ both internalization and externalization in an optimal manner.

The liquidity costs implied by these equilibrium dynamics can be best understood in the case of a highly liquid open market. To wit, we show that as the trading costs in the open market tend to zero, the price in the dealer market converges to the same martingale price as in the open market. The clients' liquidity costs compared to this benchmark in turn admit simple, intuitive expressions that depend on the fluctuations of the client demand.  If the latter is smooth, trading through the dealer market is approximately equivalent to trading directly in the open market at a higher cost that depends on the number of dealers and reflects the premium that is necessary to entice the dealers to provide the necessary liquidity. This markup disappears in the limit of many small dealers, so that smooth client flow then trades at approximately the same prices as in the open market. This holds irrespective of the dealers' risk tolerances, since smooth flow can be hedged efficiently in the open market.

This changes for diffusive noise trades. Such irregular order flow is
more difficult to pass on to the illiquid open market. Accordingly,
the corresponding liquidity costs asymptotically scale with the square
root of the trading cost in the open market, multiplied by the
square root of the dealers' holding cost. The order flow at hand
enters through its quadratic variation, similarly as in the reduced
form model of \cite{cetin.al.04}. Here, however, the trading costs of
such ``rough'' strategies cannot be avoided as in
\cite{cetin.al.04,bank.baum.04} by approximating it with smooth
strategies. More generally, the liquidity costs implied by our model
are continuous in the client demand. As a consequence, strategies of
various forms are priced consistently and incur finite but nontrivial
liquidity costs which reflect the strategies' regularity. 

\medskip

The results described so far pertain to the case of a fixed given client demand as in~\cite{garleanu.al.09}. However, our model is tractable enough to also endogenize (some or all of) the order flow. More specifically, we can study how clients optimally track their target positions, so that their demand responds to prices. In this case, Formula~\eqref{eq:price1} remains valid, \emph{if} the future inventory is replaced by aggregate demand minus offsetting positions in the open market. If demand is inelastic, targets and actual positions coincide, and we recover the previous formula. In contrast, price-sensitive clients accept some deviations from their target positions, but the equilibrium price is still determined by the aggregate (expectations) of all individual target positions.

To illustrate the implications of this result and the corresponding optimal trading strategies, we consider two examples.  In the first, the clients' trading targets are constant as in~\cite{choi.al.19}. This leads to optimal liquidation problems similar to the ones studied in~\cite{bertsimas.lo.98,almgren.chriss.01,obizhaeva.wang.13} and many more recent papers.  Subsequently, we turn to an example with diffusive trading targets that correspond to the ``high-frequency trading needs'' considered in \cite{lo.al.04,sannikov.skrzypacz.16}.

For constant trading targets, the clients' optimal trading strategies are of a similar form as in optimal liquidation models with transient price impact \cite{obizhaeva.wang.13}: isolated bulk trades combined with otherwise smooth order flow. In contrast, diffusive trading targets as in \cite{lo.al.04,sannikov.skrzypacz.16} lead to optimal client demands with nontrivial quadratic variation. Therefore, our model consistently combines the qualitative properties of standard models for optimal liquidation, while nevertheless allowing for rapidly fluctuating inventories in line with the empirical evidence documented by \cite{carmona.webster.13}, for example.

The effects of limited liquidity on the equilibrium price also depends on the properties of the client order flow. In the optimal liquidation example (and, more generally, for deterministic targets), illiquidity only affects price levels and expected returns, but not volatility. In contrast, random demands such as the diffusive trading targets generally also change volatility. Here, the sign of this change is determined by the correlation between trading targets and the fundamental value similarly as in~\cite{herdegen.al.19}. 

\medskip

This article is organized as follows. Our model for
the dealer and open markets is introduced in
Section~\ref{s:model}. Section~\ref{s:equilibrium} in turn contains
our characterization of equilibrium prices and trading strategies in
this context, as well as a detailed discussion of the implications of
these results. For better readability, all proofs are delegated to the appendix.

\paragraph{Notation} 
A predictable process $X=(X_t)_{t \in [0,T]}$ belongs to $\mathcal{L}^2$ if $\mathbb{E}[\int_0^T X_t^2dt]<\infty$ and to $\mathcal{S}^2$ if $\mathbb{E}[\sup_{t \in [0,T]} X^2_t]<\infty$. The set of square-integrable martingales is denoted by $\mathcal{M}^2$, and we write $\mathcal{H}^2$ for the semimartingales whose local martingale part belongs to $\mathcal{M}^2$ and whose finite-variation part has square-integrable total variation. For an It\^o process with dynamics $dX_t=\mu_t dt+\sigma dW_t$, this holds if $\mathbb{E}[(\int_0^T |\mu_t|dt)^2+\int_0^T\sigma_t^2dt]<\infty$.

\section{Model}\label{s:model}

\subsection{Agents}
We consider finitely many agents indexed by
$a \in \cA \not= \emptyset$. Agent $a$ has mass $m(a) \in (0,\infty)$, and all agents share the same beliefs and
information flow described by the filtered probability space
$(\Omega,\cF,(\cF_t),\P)$ satisfying the usual conditions of right-continuity and completeness.

\subsection{Financial Market}
The agents invest in two assets. The first is riskless
 and bears no interest (for simplicity). The second asset pays an exogenous liquidating dividend
$\scr{D}_T \in L^2(\cF_T)$ at time $T>0$ and can be traded in two markets. 

\paragraph{Dealer Market}
In the first market, the agents $a \in \mathcal{A}$ competitively trade the asset at the \emph{equilibrium price}  
$S \in \cH^2$ which clears this market and matches the terminal payoff $S_T=\scr{D}_T$. In our main result, Theorem~\ref{thm:Nash}, we determine this price explicitly.

If some of the agents have exogenous trading needs whereas others don't, this market can be interpreted as a competitive dealer market, where the 
``dealers'' (or ``intermediaries'' or ``market makers'') from the second group earn a premium for supplying liquidity to the first group of ``clients'' (or ``outside customers''~\cite{grossman.miller.88} or ``end-users''~\cite{garleanu.al.09}). 

In general, if agent $a\in\cA$ follows a trading strategy $K^a \in \cS^2$ (specifying the number of risky assets held) in the dealer market, then this generates the expected P\&L
\begin{align}
  \label{eq:2}
  \E\left[\int_0^T K^a_t dS_t\right] = \E\left[\int_0^T K^a_t dA_t\right].
\end{align}
Here, $A$ denotes the ``drift'' of $S$, i.e., the predictable bounded variation part in its Doob-Meyer-decomposition.

\paragraph{Open Market}

The dealer market described above models liquidity provision by ``pure internalizers'', who absorb their customers' order flow until it is offset by future incoming orders.  In real dealer markets, this is complemented by ``externalization'', i.e., actively ``hedging out trades in the open market''~\cite{butz.oomen.17}. In the classic model of Grossman and Miller~\cite{grossman.miller.88} (or in the recent work of Garleanu and Pedersen~\cite[Section~3.2]{garleanu.pedersen.16}, for example) the risk inherent in this intermediation process is linked to a fixed search time needed to locate a counterparty that allows the dealers to lay off their positions. In the present study, we formulate and solve a \emph{dynamic} version of the dealers' inventory management problem, where the risky asset can be traded at all times on the open market, but with a trading friction that imposes a penalty for the fast liquidation of large positions. 

To wit, in addition to the dealer market, we consider a second market for the risky asset, where its unaffected price process is given by its expected dividend,
\begin{align}
  \label{eq:3}
  \scr{D}_t := \E_t[\scr{D}_T], \quad 0 \leq t \leq T.
  \end{align}
This martingale price is assumed for simplicity, because it eliminates speculation in the open market. The corresponding execution price is in turn given by $\scr{D}_t+\lambda \bar{u}+\frac12 \lambda^a u^a$ for the trading rate $u^a \in \cL^2$ of agent $a \in \cA$, that is, the speed at which her current position $U^a_t=\int_0^t u^a_s ds$ in the open market is adjusted. (The position $U^a$ then automatically belongs to $\cS^2$.) Here, $\bar{u} := \sum_{a \in \cA} m(a) u^a$ and the constant 
$\lambda \geq 0$ describes the impact that this total order flow has on the market price. In contrast, the parameter $\lambda^a$ models agent $a$'s idiosyncratic trading frictions such as search costs, that do not depend on the other agents' trading activities.\footnote{This is similar in spirit to allowing agents to pay a quadratic cost to shorten their search times.} If the open market models interdealer trading, then the individual frictions are naturally finite for the dealers but infinite for their customers.

 In summary, agent $a$'s trades in the open market generate an expected P\&L of
\begin{align}
  \nonumber
  \E&\left[\scr{D}_TU^a_T -\int_0^T \left(\scr{D}_t+\lambda \bar{u}_t\right) u^a_t dt -\int_0^T \frac{\lambda^a}{2}
  (u^a_t)^2 dt\right]\\\label{eq:4}
 &= -\E\left[\int_0^T \left(\lambda u^{-a}_t u^a_t
   +\left(\lambda m(a)+\frac{1}{2}\lambda^a\right) (u^a_t)^2\right)dt\right].
\end{align}
Here, $\bar{u}^{-a} := \bar{u}-m(a)u^a$ denotes the other agents' aggregate
trading rate.

\subsection{Goal Functionals}

We now formulate the agents' goal functionals. Similarly as in \cite{garleanu.pedersen.13,garleanu.pedersen.16,sannikov.skrzypacz.16,bouchard.al.17,choi.al.19}, these are chosen to be of a linear-quadratic form for maximum tractability.

\paragraph{Trading Targets and Inventory Costs}

The agents have an incentive to trade due to exogenous exposures to risk.\footnote{The impact of illiquidity on trades due to heterogenous beliefs about fundamentals is studied in~\cite{muhlekarbe.al.20}.} A particularly convenient way to model this is to
fix a \emph{target position} $\xi^a \in \cS^2$ for each
agent $a \in \cA$, and in turn penalize the mean-squared distance to the agent's total risky position (accumulated either from dealers or in the open market):
\begin{align}
  \label{eq:5}
 \frac{1}{2} \E\left[\int_0^T (\xi^a_t-K^a_t-U^a_t)^2dt \right].
\end{align}
The importance of the inventory penalty~\eqref{eq:5} relative to expected trading gains and losses \eqref{eq:2}, \eqref{eq:4} is described by the agents' \emph{risk-tolerances} $\rho^a>0$, $a\in \cA$. Together, this leads to the linear-quadratic goal functional
\begin{align}
  \label{eq:7}
  J^a(K^a,u^a;\bar{u}^{-a},S) := \E&\left[\int_0^T K^a_t dA_t - \int_0^T \left(\lambda u^{-a}_t u^a_t
   +\left(\lambda m(a)+\frac{1}{2}\lambda^a\right) (u^a_t)^2\right)dt\right] \\
   &-\E\left[\int_0^T \frac{1}{2\rho^a}(\xi^a_t-K^a_t-U^a_t)^2dt \right].
\end{align}
Given the other agents' aggregate transactions $\bar{u}^{-a}$ in the open market and a price process $S \in \cH^2$ in the dealer market, agent $a$ maximizes this over $K^a \in \cS^2$ and $u^a \in \cL^2$. Our goal now is to determine the \emph{equilibrium} price $S$ for which the dealer market clears.

\section{Equilibrium}\label{s:equilibrium}

The agents interact with each other through the equilibrium price in the dealer market and their common price impact in the open market. With additional exogenous demands 
$K^N \in \cS^2$ from noise traders in the dealer market, this leads to the following notion of equilibrium:

\begin{dfn}
  A price process $S \in \cH^2$ in the dealer market and trading strategies
  $(K^a)_{a \in \cA}$ and $(u^a)_{a \in \cA}$ in the dealer and the open markets, respectively, form an
  \emph{equilibrium} if
\begin{enumerate}
\item[(i)] the dealer market clears, in that $K^N+\sum_{a \in \cA} m(a) K^a=0$;
\item[(ii)] the trading strategies form an (open loop) Nash
  equilibrium in that the strategy $(K^a,u^a)$ maximizes agent $a$'s target
  functional $J^a(\cdot,\cdot;\bar{u}^{-a},S)$ over $\cS^2 \times \cL^2$.
\end{enumerate}
\end{dfn}

\begin{remark}
As in \cite{garleanu.al.09}, we study a competitive equilibrium in the dealer market for tractability, and also because non-competitive equilibrium prices in continuous time necessarily introduce additional degrees of freedom (cf., e.g., \cite{sannikov.skrzypacz.16,choi.al.19}). 

The price-taking assumption in competitive equilibria is reasonable for a large number of small agents. In this regime, however, it is challenging to model the trading frictions in the open market in a consistent and nontrivial manner. For example, if each (small) agent simply neglects their own contribution to the execution price, this leads to a linear trading cost and in turn large trading rates that are not negligible even for small agents. To resolve this, we start from Nash competition between finitely many agents and in turn identify the competitive limit for many small agents in a second step, cf.~the discussion after Theorem~\ref{thm:Nash}.
\end{remark}

We now present our main result on existence and uniqueness result of an equilibrium price for the dealer market. 
To formulate this concisely, it is convenient to recall the following result from~\cite{bouchard.al.17}[Theorem A.4]:

\begin{lem}\label{lem:fbsde}
Fix $X \in \cL^2$ and $\Delta>0$. Then, the unique solution $(u,U) \in \cL^2 \times \cH^2$ of the following
linear forward-backward stochastic differential equation (FBSDE)\footnote{Here, the square-integrable martingale $M$ is part of the solution. }
\begin{equation}
\begin{split}
  \label{eq:25}
  du_t &= \Delta \left(U_t-X_t \right)dt+dM_t, \quad u_T=0,\\
  U_0=0, \quad dU_t &= u_t dt,
  \end{split}
\end{equation}
is given by 
\begin{align}
  \label{eq:30}
  \mathbf{u}^\Delta_t(X)&:={\E_t\left[\int_t^T k^\Delta(t,s)X_s ds\right] }-F^\Delta(t)\mathbf{U}^\Delta_t(X), \quad
       0 \leq t \leq T,\\ 
  \mathbf{U}^\Delta_t(X)&:=\frac{1}{\Delta}\int_0^t k^\Delta(s,t) \E_s\left[\int_s^T k^\Delta(s,r)X_r dr\right] ds,\label{eq:33} \quad
       0 \leq t \leq T,
       \end{align}
for the kernel and the function
\begin{align}
  \label{eq:10}
   k^\Delta(t,s)=
  \frac{\Delta\cosh(\sqrt{\Delta}(T-s))}{\cosh(\sqrt{\Delta}(T-t))},\quad F^\Delta(t)=\sqrt{\Delta}\tanh\left(\sqrt{\Delta} (T-t)\right),
  \quad 0 \leq s,t \leq T.
\end{align}
\end{lem}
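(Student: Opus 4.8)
The plan is to solve the FBSDE~\eqref{eq:25} by means of a \emph{deterministic affine decoupling field}: I look for a solution of the form $u_t = -F^\Delta(t)U_t + \eta_t$ with $F^\Delta$ deterministic and $\eta$ adapted. Differentiating this ansatz and substituting $dU_t = u_t\,dt$ into $du_t = \Delta(U_t - X_t)\,dt + dM_t$, I match the coefficient of $U_t$: since $U_T$ is not deterministic, the terminal condition $u_T = 0$ forces $F^\Delta(T) = 0$, and the $dt$-balance gives the scalar Riccati equation $\dot F^{\Delta} = (F^\Delta)^2 - \Delta$, whose solution with $F^\Delta(T) = 0$ is $F^\Delta(t) = \sqrt{\Delta}\,\tanh(\sqrt{\Delta}(T-t))$, as one checks directly using $\tfrac{d}{dt}\tanh(\sqrt{\Delta}(T-t)) = -\sqrt{\Delta}\,\sech^2(\sqrt{\Delta}(T-t))$ and $\sech^2 = 1-\tanh^2$. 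The residual terms then show that $\eta$ solves the \emph{linear} BSDE $d\eta_t = (F^\Delta(t)\eta_t - \Delta X_t)\,dt + dM_t$, $\eta_T = 0$. Solving this by the integrating factor $\exp(-\int_0^t F^\Delta(r)\,dr)$ gives $\eta_t = \E_t[\int_t^T \exp(-\int_t^s F^\Delta(r)\,dr)\,\Delta X_s\,ds]$, and the elementary evaluation $\int_t^s F^\Delta(r)\,dr = \log(\cosh(\sqrt{\Delta}(T-t))/\cosh(\sqrt{\Delta}(T-s)))$ turns $\exp(-\int_t^s F^\Delta)$ into $\cosh(\sqrt{\Delta}(T-s))/\cosh(\sqrt{\Delta}(T-t))$, i.e.\ into the kernel $k^\Delta(t,s)$ of~\eqref{eq:10}; this reproduces the conditional-expectation term of~\eqref{eq:30}. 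Finally $U$ is recovered from the now linear forward ODE $\dot U_t = -F^\Delta(t)U_t + \eta_t$, $U_0 = 0$, again by variation of constants, and interchanging the two time integrals yields~\eqref{eq:33}; the square-integrable martingale $M$ is the martingale part of $\eta$.

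The same computation is most cleanly organized as a direct \emph{verification} of~\eqref{eq:30}--\eqref{eq:33}. The key elementary facts are the kernel identities $\partial_t k^\Delta(t,s) = F^\Delta(t)k^\Delta(t,s)$ and $\partial_t k^\Delta(s,t) = -F^\Delta(t)k^\Delta(s,t)$ (both just logarithmic derivatives of $\cosh$), the diagonal value $k^\Delta(t,t) = \Delta$, and the Riccati relation for $F^\Delta$. A Leibniz-rule computation applied to~\eqref{eq:33} verifies $d\mathbf{U}^\Delta_t(X) = \mathbf{u}^\Delta_t(X)\,dt$, while $\mathbf{U}^\Delta_0(X) = 0$ and $\mathbf{u}^\Delta_T(X) = 0$ are immediate from the empty integral and from $F^\Delta(T) = 0$. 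For the backward dynamics I rewrite the conditional expectation in~\eqref{eq:30} by splitting the integrand's domain, $\E_t[\int_t^T \cosh(\sqrt{\Delta}(T-s))X_s\,ds] = P_t - \int_0^t \cosh(\sqrt{\Delta}(T-s))X_s\,ds$ with the genuine martingale $P_t := \E_t[\int_0^T \cosh(\sqrt{\Delta}(T-s))X_s\,ds]$, which is square-integrable because $X \in \cL^2$ and the hyperbolic weight is bounded on $[0,T]$; applying It\^o's product rule to the resulting expression for $\mathbf{u}^\Delta_t(X)$ and cancelling the drift terms via the Riccati relation produces $d\mathbf{u}^\Delta_t(X) = \Delta(\mathbf{U}^\Delta_t(X) - X_t)\,dt + dM_t$, with $M$ the square-integrable martingale inherited from $P$. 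The memberships $\mathbf{u}^\Delta(X) \in \cL^2$ and $\mathbf{U}^\Delta(X) \in \cH^2$ then follow from the same boundedness together with Doob's and Jensen's inequalities.

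For \emph{uniqueness}, suppose $(u^i, U^i) \in \cL^2 \times \cH^2$, $i=1,2$, both solve~\eqref{eq:25} with martingale parts $M^i$, and set $(\delta u, \delta U) := (u^1 - u^2, U^1 - U^2)$ and $\delta M := M^1 - M^2$. These solve the homogeneous system $d\delta u_t = \Delta\,\delta U_t\,dt + d\delta M_t$, $\delta u_T = 0$, $d\delta U_t = \delta u_t\,dt$, $\delta U_0 = 0$. Since $\delta U$ has finite variation, It\^o's product rule gives $d(\delta u_t\,\delta U_t) = (\delta u_t^2 + \Delta\,\delta U_t^2)\,dt + \delta U_t\,d\delta M_t$; integrating over $[0,T]$, taking expectations (the stochastic integral is a true martingale since $\delta U \in \cS^2$ and $\delta M \in \cM^2$), and using $\delta u_T = 0 = \delta U_0$ yields $\E[\int_0^T (\delta u_t^2 + \Delta\,\delta U_t^2)\,dt] = 0$. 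As $\Delta > 0$, this forces $\delta u \equiv 0$ and $\delta U \equiv 0$, hence also $\delta M \equiv 0$.

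I expect the main obstacle to be the verification of the backward dynamics: one must correctly isolate the true martingale $M$ hidden inside the $t$-dependent conditional expectation in~\eqref{eq:30} and then push It\^o's formula through, relying on the Riccati identity to annihilate precisely those non-martingale terms not already captured by $\Delta(\mathbf{U}^\Delta_t(X) - X_t)$. The remaining steps — solving the Riccati equation and the linear BSDE/ODE, the time-integral interchange in~\eqref{eq:33}, and the integrability bookkeeping — are routine. Since the statement is quoted verbatim from \cite{bouchard.al.17}[Theorem~A.4], one may of course also simply appeal to the reference there.
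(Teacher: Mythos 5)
The paper does not actually prove Lemma~\ref{lem:fbsde}: the in-text ``proof'' is the citation to \cite{bouchard.al.17}. Your self-contained derivation (decoupling ansatz $u_t=-F^\Delta(t)U_t+\eta_t$, the scalar Riccati equation $\dot F^\Delta=(F^\Delta)^2-\Delta$ with $F^\Delta(T)=0$, the linear BSDE for $\eta$ solved by the integrating factor, the linear forward ODE for $U$, and the energy-identity uniqueness argument) is therefore a genuinely different route, and it is sound. In particular, the variation-of-constants step correctly yields $\mathbf{U}^\Delta$ exactly as in \eqref{eq:33}, and the claim that $\int_0^\cdot\delta U\,d\delta M$ is a true martingale is justified for $\delta U\in\cS^2$, $\delta M\in\cM^2$ via BDG and Cauchy--Schwarz, so the expectation of the stochastic integral does vanish as you assert. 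The second, verification-style organization you sketch is equally valid.

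Two points should be corrected before this can stand. First, a small slip: $\exp(-\int_t^s F^\Delta(r)\,dr)=\cosh(\sqrt\Delta(T-s))/\cosh(\sqrt\Delta(T-t))$ equals $k^\Delta(t,s)/\Delta$, not $k^\Delta(t,s)$; your final $\eta_t=\E_t[\int_t^T k^\Delta(t,s)X_s\,ds]$ is nevertheless right because the $\Delta$ from $\Delta X_s$ cancels, but the intermediate identification should be stated accurately. Second, and more importantly, that $\eta_t$ does \emph{not} match the conditional-expectation term in \eqref{eq:30}, which carries a spurious factor $\tfrac1{\sqrt\Delta}$. You assert that your computation ``reproduces the conditional-expectation term of \eqref{eq:30}'' when in fact it differs by $\sqrt\Delta$. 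The correct reaction is to flag \eqref{eq:30} as a typo: Leibniz differentiation of \eqref{eq:33} together with $k^\Delta(t,t)=\Delta$ and $\partial_t k^\Delta(s,t)=-F^\Delta(t)k^\Delta(s,t)$ forces $\mathbf{u}^\Delta_t(X)=\E_t[\int_t^T k^\Delta(t,s)X_s\,ds]-F^\Delta(t)\mathbf{U}^\Delta_t(X)$ with no prefactor, and the paper's own appendix uses exactly this corrected version (the quantity $\bar K^N_t=\E_t[\int_t^T k^\Delta(t,s)K^N_s\,ds]$ appearing in the proofs of Lemmas~\ref{lem:costsasym1} and~\ref{prop.costsemi}). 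Your derivation is right; your stated agreement with \eqref{eq:30} as printed is not, and should be replaced by an explicit note of the inconsistency.
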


To rule out frictionless trading in the open market, we assume that
\begin{align}
  \label{eq:6}
   \lambda + \lambda^a>0, \quad a \in \cA,
\end{align}
which allows us to introduce the \emph{price elasticities}
$$\eta^a =1/(m(a)\lambda+\lambda^a), \quad  \bar{\eta} = \sum_{a \in \cA} m(a)\eta^a, \quad  \eta=1/\lambda.$$ 
(In the case $\lambda=0$, it is natural to let $\eta=+\infty$ and $\Delta= \bar{\eta}/\bar{\rho}$.)
We also define the agents' aggregate risk tolerance and target position:
$$\bar{\rho}=\sum_{a\in\cA} m(a)\rho^a, \quad \bar{\xi} = \sum_{a \in \cA} m(a)\xi^a.$$
With these ingredients, we can now formulate our main result, which characterizes the unique equilibrium price in the dealer market as well as the agents' optimal trading strategies in both the dealer market and the open market.

\begin{thm}\label{thm:Nash}
\begin{enumerate}
\item[(i)]
  There exists a unique equilibrium
  $(S,(K^a,u^a)_{a \in \cA}) \in \cH^2 \times (\cS^2 \times \cL^2)^\cA$. 
  \item[(ii)]  The agents' aggregate position $\bar{U}=\sum_{a \in \mathcal{A}} m(a)U^a$ in the open
  market is 
  \begin{align}
  \label{eq:8}
    \bar{U}_t =  \mathbf{U}_t^\Delta\left(K^N+\bar{\xi}\right), \quad 0 \leq t \leq T,
  \end{align}
   for $\mathbf{U}^\Delta$ from Lemma~\ref{lem:fbsde} with 
   $$\Delta= \frac{1}{\bar{\rho}} \frac{\eta \bar{\eta}}{\eta+\bar{\eta}}.$$
  Agent $a$'s share of $\bar{U}_t$ is
  $U^a_t = \frac{\eta^a}{\bar{\eta}} \bar{U}_t$. 
  \item[(iii)] The equilibrium price of the risky asset in the dealer market is
  \begin{align}
    \label{eq:13}
    S_t = \E_t\left[\scr{D}_T-\int_t^T \mu_s ds\right], \quad \mbox{where }  \mu_t = 
    \frac{1}{\bar{\rho}}\left(\bar{U}_t-K^N_t-\bar{\xi}_t\right),
    \quad 0 \leq t \leq T. 
      \end{align}
  Agent $a$'s optimal position in the dealer market is
  \begin{align}
    \label{eq:15}
      K^a_t = \xi^a_t -\frac{\eta^a}{\bar{\eta}}\bar{U}_t +\frac{\rho^a}{\bar{\rho}}\left(\bar{U}_t-K^N_t-\bar{\xi}_t\right),
    \quad 0 \leq t \leq T.
  \end{align}
  \end{enumerate}
  \end{thm}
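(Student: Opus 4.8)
The plan is to reduce both equilibrium requirements to the first-order (G\^ateaux) conditions of the agents' individual optimisation problems, to aggregate these using the market-clearing condition, and to recognise the resulting coupled system as the linear FBSDE~\eqref{eq:25} of Lemma~\ref{lem:fbsde}. The fact that makes this work is that, for fixed $\bar{u}^{-a}$ and $S\in\cH^2$, the functional $J^a(\cdot,\cdot;\bar{u}^{-a},S)$ in~\eqref{eq:7} is a concave quadratic on $\cS^2\times\cL^2$: expanding about any $(K^a,u^a)$, the purely quadratic part of $J^a(K^a+\tilde{K},u^a+\tilde{u})$ equals $-\E[\int_0^T(\frac1{2\rho_a}(\tilde{K}_t+\tilde{U}_t)^2+(\lambda m(a)+\tfrac12\lambda^a)\tilde{u}_t^2)\,dt]\le 0$, where $\tilde{U}_t=\int_0^t\tilde{u}_s\,ds$ and $\lambda m(a)+\tfrac12\lambda^a>0$ by~\eqref{eq:6} and $m(a)>0$. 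Hence $(K^a,u^a)$ maximises $J^a$ \emph{if and only if} its G\^ateaux derivative vanishes in all directions of $\cS^2\times\cL^2$; this both pins down candidate equilibria (uniqueness) and verifies them (existence).

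First I would compute the two first-order conditions. Perturbing $K^a$ by $\tilde{K}\in\cS^2$ and using $\E[\int_0^TK^a_t\,dS_t]=\E[\int_0^TK^a_t\,dA_t]$ from~\eqref{eq:2}, stationarity reads $\E[\int_0^T\tilde{K}_t\,dA_t+\int_0^T\frac1{\rho_a}(\xi^a_t-K^a_t-U^a_t)\tilde{K}_t\,dt]=0$ for all $\tilde{K}$; testing against elementary predictable integrands forces the drift $A$ of $S$ to be absolutely continuous with density $-\frac1{\rho_a}(\xi^a_t-K^a_t-U^a_t)$, which is therefore the \emph{same} predictable $\cL^2$-process for every $a$. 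Weighting by $m(a)$, summing over $\cA$ and inserting $\sum_{a}m(a)K^a=-K^N$ identifies this common drift as $\mu_t=\frac1{\bar{\rho}}(\bar{U}_t-K^N_t-\bar{\xi}_t)$, whence the representation~\eqref{eq:13} of $S$ (using $S_T=\scr{D}_T$ and $S\in\cH^2$), and re-solving the individual relation for $K^a$ gives~\eqref{eq:15} once $U^a=\frac{\eta^a}{\bar{\eta}}\bar{U}$ is established. Perturbing $u^a$ by $\tilde{u}\in\cL^2$, using the $K$-condition $\frac1{\rho_a}(\xi^a_t-K^a_t-U^a_t)=-\mu_t$ and transporting the $\tilde{U}$-term by Fubini into $-\E[\int_0^T\tilde{u}_tY_t\,dt]$ with $Y_t:=\E_t[\int_t^T\mu_s\,ds]$, stationarity in $u^a$ becomes $(\lambda m(a)+\lambda^a)u^a_t+\lambda\bar{u}_t+Y_t=0$ (using $\bar{u}^{-a}=\bar{u}-m(a)u^a$); solving for $u^a$, weighting and summing over $\cA$, and back-substituting yields $\bar{u}_t=-\bar{\rho}\Delta\,Y_t$ and $u^a_t=\frac{\eta^a}{\bar{\eta}}\bar{u}_t$ with $\Delta=\frac1{\bar{\rho}}\frac{\eta\bar{\eta}}{\eta+\bar{\eta}}$.

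Next I would close the system. Writing $Y_t=\E_t[\int_0^T\mu_s\,ds]-\int_0^t\mu_s\,ds$, one has $dY_t=-\mu_t\,dt+dM_t$ for some $M\in\cM^2$ (since $\mu\in\cL^2$) and $Y_T=0$; inserting $\mu_t=\frac1{\bar{\rho}}(\bar{U}_t-X_t)$ with $X:=K^N+\bar{\xi}\in\cS^2\subseteq\cL^2$ into $d\bar{u}_t=-\bar{\rho}\Delta\,dY_t$ shows that $(\bar{u},\bar{U})$ solves~\eqref{eq:25} for precisely this $X$ and $\Delta$, with $\bar{u}_T=0$ and $\bar{U}_0=0$. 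Lemma~\ref{lem:fbsde} then yields $\bar{U}=\mathbf{U}^\Delta(K^N+\bar{\xi})$ and $\bar{u}=\mathbf{u}^\Delta(K^N+\bar{\xi})$, hence uniquely $\mu$, $S$, $U^a$, $u^a$ and $K^a$ — which gives the uniqueness in~(i) together with~\eqref{eq:8},~\eqref{eq:13},~\eqref{eq:15}. For existence I would run this in reverse: \emph{define} $\bar{U}$ and $\bar{u}$ by~\eqref{eq:33} and~\eqref{eq:30}, then $\mu$, $S$ and $(U^a,u^a,K^a)$ by the formulas above; check the integrability claims ($S\in\cH^2$ using $\mu\in\cS^2$, $\scr{D}\in\cM^2$ and a Cauchy--Schwarz bound on $\int_0^T|\mu_s|\,ds$; $K^a\in\cS^2$; $u^a\in\cL^2$) and $S_T=\scr{D}_T$; verify clearing $\sum_{a}m(a)K^a=-K^N$; and verify that the two first-order conditions hold, which by the concavity noted above makes each $(K^a,u^a)$ optimal.

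The main obstacle, besides the bookkeeping in the aggregation step, is the a~priori regularity needed to make the variational argument rigorous for an \emph{arbitrary} equilibrium: one must argue that the drift of any equilibrium price is absolutely continuous with an $\cL^2$-density (so the first-order condition in $K^a$ may be read $dt\otimes\P$-a.e.\ and then aggregated), justify the Fubini interchange and the conditional-expectation identity in the $u^a$ first-order condition within $\cL^2\times\cH^2$, and confirm that $J^a$ is finite on $\cS^2\times\cL^2$ so the quadratic expansion underlying the ``if and only if'' is legitimate. A minor, separate point is the degenerate case $\lambda=0$, that is, $\eta=+\infty$, which is covered by the natural limiting conventions for the elasticities and $\Delta$.
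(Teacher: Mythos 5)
Your proposal is correct and follows essentially the same route as the paper's proof: first-order (G\^ateaux) conditions for the strictly concave quadratic functionals, identification of the common drift $\mu$ via aggregation and market clearing, recognition of $(\bar{u},\bar{U})$ as the unique solution of the FBSDE in Lemma~\ref{lem:fbsde} with $X=K^N+\bar{\xi}$, and a reverse verification step using concavity for existence. The regularity points you flag (absolute continuity of the drift, the Fubini/tower interchange) are handled in the paper in exactly the same spirit, by testing the stationarity condition against arbitrary perturbations in $\cS^2$ and $\cL^2$.
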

  
  \begin{proof}
See Appendix~\ref{s:proof}.
\end{proof}

The risk premium $\mu_t$ in the dealer market is thus
determined by the agents' aggregate exposure $K^N+\bar{\xi}-\bar{U}$
at any one time, measured in units of their aggregate risk tolerance
$\bar{\rho}$. In view of~\eqref{eq:15}, the risk premium incentivizes agent $a$
to accept her share of this exposure in proportion to her individual risk
tolerance $\rho^a$, having contributed to the aggregate risk-transfer
to the open market $\bar{U}$ in proportion with her effective
elasticity $\eta^a$.  The asset price \emph{fluctuations} in the dealer market (i.e., the martingale part of the asset price~\eqref{eq:13}) in turn depend on how uncertainty about the
liquidating dividend $\scr{D}_T$ and future risk premia are revealed over time.

Alternatively, the first-order conditions for the optimal trading
rates in the open market show that the equilibrium price from Theorem~\ref{thm:Nash} admits the following concise representation:
\begin{equation}\label{eq:price}
S_t=\scr{D}_t+\left(\frac{1}{\eta}+\frac{1}{\bar \eta}\right){\bar u_t}.
\end{equation}
This means that the  adjustment of the equilibrium price  compared to
expected dividend is determined by the agents' aggregate trading rate
in the open market measured in units of the aggregate price elasticity
$1/(\eta^{-1}+\bar{\eta}^{-1})$ resulting from the combination of the
agents' idiosyncratic and their common impact on open market
prices. To wit, suppose the agents are on aggregate buying in the
open market ($\bar{u}_t>0$) because they want to increase their net
position. Then they will also be willing to purchase further risky
assets in the dealer market at a premium. Despite this appealingly simple interpretation, the  dependence of the equilibrium price on  the  model  parameters  is  generally  rather  involved, since it typically depends on the past, present, and (expectations of) future demands.  

\subsection{Large-Liquidity Asymptotics for Noise Trades}\label{ss:asym}

In order to better understand the equilibrium prices~\eqref{eq:13} and their implications, we therefore first consider the simplest version of the dealer market, where $M \in \mathbb{N}$ ``dealers'' (i.e., agents without idiosyncratic trading targets, $\xi^d=0$) with common  masses $m(a)=1/M$, risk tolerances $\rho^a=\rho_d>0$, and individual trading costs $\lambda^a=0$ absorb the demand $K^N\in \mathcal{S}^2$ of noise traders. 

Our first result shows that the equilibrium price~\eqref{eq:13} approaches the expected dividend as the open market becomes more and more liquid for $\lambda \to 0$: 

\begin{prop}\label{prop:priceasym}
For any noise-trader demand $K^N \in \mathcal{S}^2$, the equilibrium price $S$ from
Theorem~\ref{thm:Nash} converges to the expected dividend $\scr{D}_t=\mathbb{E}_t[\scr{D}_T]$ in the particularly strong sense that
$$
\sup_{-1 \leq H \leq 1 \text{ predictable }} \E \left[\sup_{0 \leq t \leq T}
\left|\int_0^t H_s d(\scr{D}_s-S_s)\right|^2\right] \to 0, \quad \text{ as $\lambda \to 0$.}
$$
In particular $S$ converges to $\scr{D}$ in the Emery topology as $\lambda \to 0$. Moreover, the corresponding wealth processes generated by the noise traders' demand satisfy
$$
\int_0^T K^N_t dS_t = \int_0^T K^N_t d\scr{D}_t +o(1), \quad \mbox{in $L^1$ as $\lambda \to 0$.}
$$
\end{prop}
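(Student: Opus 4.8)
The plan is to read off everything from the explicit equilibrium of Theorem~\ref{thm:Nash}, specialised to the present symmetric dealer market, where $\bar{\rho}=\rho_d$, $\eta=1/\lambda$, $\bar{\eta}=M/\lambda$ and $\bar{\xi}=0$. Then $\bar{U}=\mathbf{U}^{\Delta}(K^N)$ with $\Delta=\tfrac1{\bar{\rho}}\tfrac{\eta\bar{\eta}}{\eta+\bar{\eta}}=\tfrac{M}{\rho_d(M+1)\lambda}\to+\infty$ as $\lambda\downarrow0$, and by~\eqref{eq:13} the drift of $S$ is $\mu_t=\tfrac1{\rho_d}(\bar{U}_t-K^N_t)$. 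Combining~\eqref{eq:price} with the defining FBSDE~\eqref{eq:25} of $(\bar{u},\bar{U})=(\mathbf{u}^{\Delta}(K^N),\mathbf{U}^{\Delta}(K^N))$, whose martingale part we denote by $\bar{M}$, yields the finite--variation/martingale decomposition
\[
d(\scr{D}_t-S_t)=-\mu_t\,dt-\frac1{\rho_d\Delta}\,d\bar{M}_t .
\]
Everything then reduces to the two a priori estimates, as $\Delta\to\infty$:
\[
\text{(A)}\quad \E\Big[\int_0^T(\bar{U}_t-K^N_t)^2\,dt\Big]\longrightarrow 0,
\qquad
\text{(B)}\quad \frac1{\Delta^2}\,\E\big[\,[\bar{M}]_T\,\big]\longrightarrow 0 .
\]

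For (A) I would use that $\mathbf{U}^{\Delta}(X)$ is the unique minimiser of the strictly convex, coercive functional $U\mapsto\E[\int_0^T(\tfrac{\Delta}{2}(U_t-X_t)^2+\tfrac12\dot{U}_t^2)\,dt]$ over absolutely continuous $U$ with $U_0=0$, of which~\eqref{eq:25} is precisely the Euler--Lagrange equation (as follows from a standard convexity argument, cf.~\cite{bouchard.al.17}). Comparing its value with that of any competitor $V$ ($V_0=0$, $\dot{V}\in\cL^2$) gives
\[
\E\Big[\int_0^T(\bar{U}_t-K^N_t)^2\,dt\Big]\le \E\Big[\int_0^T(V_t-K^N_t)^2\,dt\Big]+\frac1{\Delta}\,\E\Big[\int_0^T\dot{V}_t^2\,dt\Big].
\]
Since $K^N\in\cS^2\subset\cL^2$, the time--mollifications $V^h_t:=\tfrac1h\int_{(t-h)^+}^t K^N_s\,ds$ are admissible competitors with $\|V^h-K^N\|_{\cL^2}\to0$ as $h\downarrow0$; choosing $h$ small and then $\Delta$ large proves (A), and in particular $\E[\int_0^T\mu_t^2\,dt]\to0$. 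For (B) I would apply It\^o's formula to $\bar{u}_t\bar{U}_t$ and to $\bar{u}_t^2$ and use the boundary data $\bar{u}_T=0$, $\bar{U}_0=0$ (localising so that the stochastic integrals are genuine martingales): the first identity gives the energy bound $\|\bar{u}\|_{\cL^2}^2\le\tfrac{\Delta}{4}\|K^N\|_{\cL^2}^2$, and the second then yields $\E[\,[\bar{M}]_T\,]\le 2\Delta\,\E[\int_0^T\bar{u}_tK^N_t\,dt]\le \Delta^{3/2}\|K^N\|_{\cL^2}^2$, so that $\Delta^{-2}\E[\,[\bar{M}]_T\,]=O(\Delta^{-1/2})\to0$.

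Granting (A) and (B), the first (and strongest) assertion is immediate: for predictable $H$ with $|H|\le1$,
\[
\int_0^t H_s\,d(\scr{D}_s-S_s)=-\int_0^t H_s\mu_s\,ds-\frac1{\rho_d\Delta}\int_0^t H_s\,d\bar{M}_s ,
\]
so by $|H|\le1$, Cauchy--Schwarz, Doob's $L^2$ inequality and the It\^o isometry,
\[
\E\Big[\sup_{t\le T}\Big|\int_0^t H_s\,d(\scr{D}_s-S_s)\Big|^2\Big]\le 2T\,\E\Big[\int_0^T\mu_t^2\,dt\Big]+\frac{8}{\rho_d^2\Delta^2}\,\E\big[\,[\bar{M}]_T\,\big],
\]
and the right--hand side vanishes uniformly in $H$ as $\lambda\downarrow0$. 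Convergence of $S$ to $\scr{D}$ in the Emery topology follows at once, since $L^2$--convergence of $\sup_{|H|\le1}\sup_{t\le T}\big|\int_0^t H\,d(S-\scr{D})\big|$ forces the corresponding convergence in probability, which is one of the standard characterisations of Emery convergence.

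Finally I would bootstrap the $L^1$--convergence of the wealth processes from this uniform bound: writing $c_t:=1\vee\sup_{s\le t}|K^N_s|$, $H_t:=K^N_t/c_t$ (so $|H|\le1$) and $Y_t:=\int_0^t H_s\,d(S_s-\scr{D}_s)$, integration by parts gives $\int_0^T K^N_t\,d(S_t-\scr{D}_t)=c_TY_T-\int_0^T Y_{t-}\,dc_t-[c,Y]_T$. Since $c$ is nondecreasing with $c_0\ge1$ and $\E[c_T^2]\le 1+\|K^N\|_{\cS^2}^2<\infty$, the $L^1$--norm of each of the three terms is controlled, via Cauchy--Schwarz, by a constant times $\big(\E[\sup_{t\le T}Y_t^2]\big)^{1/2}\le\big(\sup_{|H|\le1}\E[\sup_t|\int_0^t H\,d(S-\scr{D})|^2]\big)^{1/2}$, which tends to $0$; hence $\int_0^T K^N_t\,dS_t=\int_0^T K^N_t\,d\scr{D}_t+o(1)$ in $L^1$. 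I expect the main obstacle to be the bookkeeping for the local--martingale part $\bar{M}$ of the aggregate open--market trading rate — establishing $\E[\,[\bar{M}]_T\,]=O(\Delta^{3/2})$ and pairing it against the vanishing prefactor $\Delta^{-1}$; by contrast the convergence $\mathbf{U}^{\Delta}(K^N)\to K^N$ is comparatively soft once its variational characterisation is in hand.
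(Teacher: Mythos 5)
Your proposal is correct, and its core step coincides with the paper's: the convergence $\E[\int_0^T(\bar U_t-K^N_t)^2dt]\to 0$ is obtained in both cases from the variational characterisation of $\bar U$ as the minimiser of the quadratic tracking functional, by comparing with absolutely continuous approximations of $K^N$ (the paper invokes density of $\{\int_0^\cdot u_s\,ds: u\in\cL^2\}$ in $\cL^2$ abstractly, you use explicit time-mollifications -- same argument). Where you genuinely diverge is in the treatment of the martingale part and of the wealth integrals. The paper writes $\scr{D}_t-S_t=M_t-\int_0^t\mu_s\,ds$ with $M_t=\E_t[\int_0^T\mu_s\,ds]$ and exploits $M_T=\int_0^T\mu_s\,ds$, so that \emph{both} parts are controlled by the single quantity $\E[\int_0^T\mu_s^2\,ds]$; the $L^1$ statement for $\int K^N\,d(S-\scr{D})$ then follows from one direct Burkholder--Davis--Gundy plus Cauchy--Schwarz estimate. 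You instead identify the martingale part as $-\tfrac{1}{\rho_d\Delta}\bar M$ (the martingale in the FBSDE for $\bar u$) and bound $\E[[\bar M]_T]=O(\Delta^{3/2})$ through It\^o energy identities for $\bar u\bar U$ and $\bar u^2$; this is correct (I checked the completion-of-the-square bound $\E[\int\bar u^2\,dt]\le\tfrac{\Delta}{4}\|K^N\|^2_{\cL^2}$ and the resulting $\Delta^{-2}\E[[\bar M]_T]=O(\Delta^{-1/2})$), but it is a detour -- the paper's observation makes (B) unnecessary -- and it requires you to justify that the localised stochastic integrals pass to the limit, e.g.\ by checking $\bar u\in\cS^2$ from \eqref{eq:30} (Doob) so that the relevant local martingales have integrable maximal functions. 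Your bootstrap of the $L^1$ wealth convergence from the uniform-in-$H$ bound, via $H=K^N/(1\vee\sup_{s\le\cdot}|K^N_s|)$ and integration by parts, is also valid and rather elegant; it needs the (true, but worth stating) fact that the running supremum of a predictable process is predictable, so that $H$ is an admissible integrand, and the $[c,Y]$ bookkeeping should be done with $c_{t-}$ versus $c_t$ in mind, though all three terms are in any case dominated by $c_T\sup_t|Y_t|$ and hence vanish by Cauchy--Schwarz. In short: same key lemma, different (correct but heavier) control of the martingale part, and a different, reusable reduction of the $L^1$ statement to the uniform bound.
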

Proposition~\ref{prop:priceasym} asserts that, as the open market becomes more and more liquid, the dealer price approaches the expected payoff of the risky asset. This is intuitive because with vanishing trading frictions in the open market, dealers can immediately unload their inventories  at this price. 

Given additional structure of the noise-trader demand, we can also identify the leading-order correction term for the noise traders' wealth process, i.e., the liquidity costs implied by the dealers' nontrivial but finite risk-bearing capacity. The form of this leading-order correction term depends on the variability of the clients' demand. To illustrate this, we discuss two examples that appear frequently in applications: smooth demands $K^N_t=\int_0^t \mu^N_s ds$ and diffusive demands with It\^o dynamics $K^N_t=\int_0^t \mu^N_s ds+\int_0^t \sigma^N_s dW_s$.

\paragraph{Smooth Demands} We first discuss demands that accumulate at a finite, absolutely continuous rate. In this case, the dealers could hedge their exposure perfectly by passing on their positions immediately to the open market subject to the quadratic cost $\lambda$ imposed on the corresponding trading rate. Therefore, the dealers could break even using this strategy for an execution price equal to the fundamental value plus the trading cost in the open market. However, due to the quadratic nature of the trading cost, they could achieve strictly positive profits in this case by absorbing only a fraction of the clients' demands. Accordingly, in equilibria with finitely many dealers, the dealers need to be paid an additional premium, but the latter vanishes in the limit of many small dealers.

More specifically, the subsequent result explicitly identifies the leading-order term of this liquidity cost as $(M+1)/M$ times the trading cost in the open market. This term is independent of the dealers' inventory costs, since smooth client demands can be hedged very efficiently by trading in the open market.

\begin{lem}\label{lem:costsasym1}
Suppose that $K^N_t=\int_0^t\mu^N_sds$ for a continuous process $\mu^N \in \scr{S}^2$. Then, the liquidity costs generated by $K$ are
\begin{equation}\label{eq:limit1}
\int_0^T K^N_t d\scr{D}_t -\int_0^T K^N_tdS_t  =\lambda\frac{M+1}{M} \int_0^T\left(\mu^N_t\right)^2 dt+o(\lambda), \quad \mbox{in $L^1$~as $\lambda \to 0$.}
\end{equation}
\end{lem}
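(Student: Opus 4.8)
The plan is to use the price representation~\eqref{eq:price} to rewrite the liquidity cost in terms of the aggregate open-market trading rate, and then to prove the resulting singular-perturbation limit ($\lambda\downarrow0$) by an energy estimate for the FBSDE~\eqref{eq:25}.

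First I would specialize the parameters: with $m(a)=1/M$, $\rho^a=\rho_d$, $\lambda^a=0$ and $\xi^a=0$ one gets $\eta^a=\bar\eta=M/\lambda$, $\eta=1/\lambda$, $\bar\rho=\rho_d$, $\bar\xi=0$, hence $\tfrac1\eta+\tfrac1{\bar\eta}=\lambda\tfrac{M+1}{M}$ and $\Delta=\tfrac{M}{\rho_d(M+1)\lambda}$, so $\Delta\to\infty$ as $\lambda\downarrow0$. By Theorem~\ref{thm:Nash}(ii), $\bar u=\mathbf{u}^\Delta(K^N)$, i.e.\ $(\bar u,\bar U)$ solves~\eqref{eq:25} with $X=K^N$; and by~\eqref{eq:price}, $\scr{D}_t-S_t=-\lambda\tfrac{M+1}{M}\bar u_t$, so the liquidity cost equals $-\lambda\tfrac{M+1}{M}\int_0^T K^N_t\,d\bar u_t$. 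Since $K^N$ is continuous of finite variation with $K^N_0=0$ and $dK^N_t=\mu^N_t\,dt$, and $\bar u_T=0$ by the terminal condition in~\eqref{eq:25}, integration by parts (the covariation term vanishing as $K^N$ has finite variation) gives $\int_0^T K^N_t\,d\bar u_t=-\int_0^T\bar u_t\mu^N_t\,dt$, so the liquidity cost equals $\lambda\tfrac{M+1}{M}\int_0^T\bar u_t\mu^N_t\,dt$. Since $\Delta\to\infty\iff\lambda\downarrow0$, the lemma thus reduces to
\[
\E\Big[\,\Big|\int_0^T\bar u_t\mu^N_t\,dt-\int_0^T(\mu^N_t)^2\,dt\Big|\,\Big]\longrightarrow0\qquad(\Delta\to\infty),
\]
and by Cauchy--Schwarz (twice, using $\mu^N\in\scr{S}^2\subset\cL^2$) this follows once I show $\bar u\to\mu^N$ strongly in $\cL^2$ as $\Delta\to\infty$.

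To prove the latter I would work with~\eqref{eq:25} directly. Put $V_t:=\bar U_t-K^N_t$, so $V_0=0$, $V$ has finite variation, and $dV_t=(\bar u_t-\mu^N_t)\,dt$. Applying It\^o's formula to $\bar u_tV_t$, integrating over $[0,T]$, and using $\bar u_TV_T=0=\bar u_0V_0$ together with the fact that $\int_0^\cdot V_t\,dM_t$ is a true martingale ($V\in\cS^2$, $M\in\cM^2$), I obtain the energy identity
\[
\E\int_0^T\bar u_t^2\,dt+\Delta\,\E\int_0^TV_t^2\,dt=\E\int_0^T\bar u_t\mu^N_t\,dt .
\]
This gives at once the $\Delta$-uniform bound $\|\bar u\|_{\cL^2}\le\|\mu^N\|_{\cL^2}$ and $\E\int_0^TV_t^2\,dt\le\|\mu^N\|_{\cL^2}^2/\Delta\to0$. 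Hence along any sequence $\Delta_n\to\infty$ there is a subsequence with $\bar u^{\Delta_n}\rightharpoonup u^\ast$ weakly in $\cL^2$; then $\bar u^{\Delta_n}-\mu^N\rightharpoonup u^\ast-\mu^N$, while for test processes $\mathbf{1}_B\phi(t)$ with $r\in(0,T)$, $B\in\cF_r$, $\phi\in C^\infty_c((r,T))$, integration by parts in $t$ combined with $V^{\Delta_n}\to0$ in $\cL^2$ yields $\langle u^\ast-\mu^N,\mathbf{1}_B\phi\rangle=0$; the linear hull of such processes being dense in $\cL^2$, $u^\ast=\mu^N$. So $\bar u^\Delta\rightharpoonup\mu^N$, and since $\limsup_\Delta\|\bar u^\Delta\|_{\cL^2}\le\|\mu^N\|_{\cL^2}\le\liminf_\Delta\|\bar u^\Delta\|_{\cL^2}$, norm convergence plus weak convergence in the Hilbert space $\cL^2$ upgrades this to $\bar u^\Delta\to\mu^N$ strongly, which closes the argument.

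The main obstacle is this last convergence: the limit is singular ($\Delta\to\infty$), and although it is routine that the aggregate open-market position $\bar U^\Delta$ tracks the demand $K^N$ ever more closely, one needs the \emph{rates} $\bar u^\Delta$ to converge to $\mu^N$, and moreover in the strong $\cL^2$ sense, which is what turns convergence of expectations into $L^1$-convergence of the (random) liquidity cost. The energy identity delivers only a uniform bound, so the weak-compactness / identification-of-the-limit / norm-convergence chain is essential. A more computational alternative would substitute the explicit kernel $k^\Delta$ from Lemma~\ref{lem:fbsde} into~\eqref{eq:30}--\eqref{eq:33}, but then one has to track a delicate cancellation between the two terms of~\eqref{eq:30} and control the boundary layer of $k^\Delta(t,\cdot)$ near $t=T$, which the energy method avoids.
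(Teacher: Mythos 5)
Your proposal is correct, and its opening reduction coincides with the paper's: specializing the parameters, invoking~\eqref{eq:price} to get $\scr{D}_t-S_t=-\lambda\tfrac{M+1}{M}\bar u_t$, and integrating by parts (using $K^N_0=0$, $\bar u_T=0$, and the absence of covariation) to write the liquidity cost as $\lambda\tfrac{M+1}{M}\int_0^T\bar u_t\mu^N_t\,dt$ is exactly the paper's~\eqref{eq:tcforabs}. Where you genuinely diverge is in proving $\int_0^T\bar u_t\mu^N_t\,dt\to\int_0^T(\mu^N_t)^2\,dt$ in $L^1$: the paper works with the explicit hyperbolic kernel, derives a linear ODE for $K^N-\bar U$, solves it, and controls everything pathwise through the modulus of continuity of $\mu^N$, obtaining almost-sure convergence of $\int_0^T|\mu^N_t-\bar u_t|\,dt$ which is then upgraded to $L^1$ by a uniform-integrability argument. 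You instead extract from the FBSDE~\eqref{eq:25} the energy identity $\E\int_0^T\bar u_t^2\,dt+\Delta\,\E\int_0^T V_t^2\,dt=\E\int_0^T\bar u_t\mu^N_t\,dt$ (valid since $V\in\cS^2$ and $M\in\cM^2$ make $\int V\,dM$ a true martingale), deduce the uniform bound $\|\bar u\|_{\cL^2}\le\|\mu^N\|_{\cL^2}$ and $\E\int_0^T V_t^2\,dt\le\|\mu^N\|^2_{\cL^2}/\Delta$, identify the weak limit through test processes $\mathbf{1}_B\phi$ (whose span is indeed dense in the predictable $\cL^2$), and upgrade to strong $\cL^2$ convergence via norm convergence; Cauchy--Schwarz then yields the $L^1$ statement. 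Both routes are sound. Yours is shorter, avoids the kernel computations and the boundary layer near $t=T$, gives the clean quantitative tracking estimate $\E\int_0^T(\bar U_t-K^N_t)^2\,dt=O(1/\Delta)$, and notably never uses the continuity of $\mu^N$, so it proves the lemma under weaker hypotheses ($\mu^N\in\cL^2$ with $K^N\in\cS^2$ suffices). What the paper's computational route buys in exchange is pathwise (almost-sure) control with an explicit rate governed by the modulus of continuity of $\mu^N$, and the same kernel machinery is then reused for the diffusive case in Lemma~\ref{prop.costsemi}, where an energy identity alone would not isolate the $\sqrt{\lambda}$-order quadratic-variation term.
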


In particular, Lemma~\ref{lem:costsasym1} implies that the equilibrium trading costs in \emph{competitive} dealer markets with many small dealers are approximately the same as in the open market.

\paragraph{Diffusive demands}

Next, we turn to trading strategies with nontrivial Brownian fluctuations. These could not be implemented directly in the open market, but can be traded at a finite cost through the dealers. Since the dealers can hedge these more irregular order flows less efficiently than the smooth flows considered above, the corresponding trading costs are of a higher asymptotic order, namely $O(\sqrt{\lambda})$ instead of $O(\lambda)$ as $\lambda \to 0$. Moreover, the dealers' inventory cost now becomes visible in the leading-order term. The asymptotically crucial feature of client demand turns out to be its quadratic variation, which also appears in the reduced-form model of \cite{cetin.al.04}, for example:\footnote{The same scaling and the target strategy's quadratic variation also appear if such diffusive target positions are tracked optimally in markets with quadratic costs in the trading rate, cf.~\cite{moreau.al.17} and the references therein.}

\begin{lem}\label{prop.costsemi}
Suppose that the underlying filtration is generated by a Brownian motion $W$ and assume that the noise-trader demand has It\^o dynamics,
$$K^N_t=\int_0^t\mu^N_sds+\int_0^t\sigma^N_s dW_s, \quad 0 \leq t \leq T.$$
Here, $|K^N|^2, |\mu^N|^2,|\sigma^N|^2 \in \mathcal{H}^2$ and these processes are Malliavin differentiable in the sense of~\cite[p.~27]{nualart2006malliavin}): $K^N_t$, $\mu^N_t,\sigma^N_t\in \mathbb{D}^{1,2}$, with continuous Malliavin derivatives  $$s \mapsto \left(D_t\left(K^N_s\right),D_t\left(\mu^N_s\right),D_t\left(\sigma^N_s\right)\right),\,0 \leq t \leq s \leq T.$$
 Finally, suppose that $\sup_{0\leq t\leq T}\E[\sup_{ t\leq s\leq T}(|(D_t(K^N_s))|^2+|(D_t(\mu^N_s))|^2)]<\infty$.
Then, the liquidity costs generated by the demand $K^N$ are
\begin{equation}\label{eq:limit2}
\int_0^T K^N_t d\scr{D}_t -\int_0^T K^N_tdS_t = \sqrt{\frac{\lambda}{\rho_d}\frac{M+1}{M}  } \int_0^T(\sigma^N_t)^2 dt+o(\sqrt{\lambda}), \quad \mbox{in $L^1$ as $\lambda \to 0$.}
\end{equation}
\end{lem}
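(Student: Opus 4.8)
\emph{The plan.} I would specialise Theorem~\ref{thm:Nash} to the $M$ identical dealers of the lemma ($m(a)=1/M$, $\rho^a=\rho_d$, $\lambda^a=0$, $\xi^d=0$, so $\bar\xi=0$, $\bar\rho=\rho_d$, $\eta=1/\lambda$, $\bar\eta=M/\lambda$), which gives
\[
\Delta=\frac{M}{(M+1)\lambda\rho_d},\qquad \frac1{\rho_d\Delta}=\lambda\,\frac{M+1}{M},\qquad \frac{1}{\rho_d\sqrt\Delta}=\sqrt{\frac{\lambda}{\rho_d}\,\frac{M+1}{M}},
\]
and $\lambda\downarrow0\Leftrightarrow\Delta\uparrow\infty$. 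By~\eqref{eq:price} the equilibrium price is $S_t=\scr{D}_t+\frac1{\rho_d\Delta}\bar u_t$ with $\bar u=\mathbf{u}^\Delta(K^N)$, $\bar U=\mathbf{U}^\Delta(K^N)$ solving the FBSDE~\eqref{eq:25} with $X=K^N$; let $\bar M$ be the square-integrable martingale part of $\bar u$. Then the liquidity cost is $\mathcal{C}:=\int_0^TK^N_t\,d\scr{D}_t-\int_0^TK^N_t\,dS_t=-\frac1{\rho_d\Delta}\int_0^TK^N_t\,d\bar u_t$, and since the right-hand side of~\eqref{eq:limit2} equals $\frac1{\rho_d\sqrt\Delta}\int_0^T(\sigma^N_t)^2\,dt$, the claim reduces to $\rho_d\Delta\,\mathcal{C}=\sqrt\Delta\int_0^T(\sigma^N_t)^2\,dt+o(\sqrt\Delta)$ in $L^1$ as $\Delta\to\infty$. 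The key step is to integrate by parts: because $K^N_0=0$ and $\bar u_T=0$ (the terminal condition in~\eqref{eq:25}), and because $[K^N,\bar u]_T=[K^N,\bar M]_T$, the product rule yields
\[
\rho_d\Delta\,\mathcal{C}=\int_0^T\bar u_t\,dK^N_t+[K^N,\bar M]_T=\int_0^T\bar u_t\mu^N_t\,dt+\int_0^T\bar u_t\sigma^N_t\,dW_t+\int_0^T\sigma^N_t\beta_t\,dt,
\]
where $\beta$ is the volatility of $\bar M$.

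\emph{The leading term.} From Lemma~\ref{lem:fbsde}, $\bar u_t$ equals $\E_t[\int_t^Tk^\Delta(t,s)K^N_s\,ds]$ modulo the finite-variation process $F^\Delta(t)\bar U_t$, so $\bar M$ is (up to the deterministic weight $\Delta/\cosh(\sqrt\Delta(T-t))$) the martingale part of $N_t:=\E_t[\int_0^T\cosh(\sqrt\Delta(T-s))K^N_s\,ds]$. The Clark--Ocone formula (available since the filtration is Brownian and $K^N_s\in\D^{1,2}$) then gives $\beta_t=\frac{\Delta}{\cosh(\sqrt\Delta(T-t))}\,\E_t[\int_t^T\cosh(\sqrt\Delta(T-s))\,D_tK^N_s\,ds]$. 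Since $D_tK^N_s=\sigma^N_t+\int_t^sD_t\mu^N_r\,dr+\int_t^sD_t\sigma^N_r\,dW_r$ for $s\ge t$, the last (It\^o) integral vanishes under $\E_t$, and $\frac{\Delta}{\cosh(\sqrt\Delta(T-t))}\int_t^T\cosh(\sqrt\Delta(T-s))\,ds=F^\Delta(t)=\sqrt\Delta\tanh(\sqrt\Delta(T-t))$; hence $\beta_t=F^\Delta(t)\sigma^N_t+g_t$ with $g_t=\frac{\Delta}{\cosh(\sqrt\Delta(T-t))}\E_t[\int_t^T\cosh(\sqrt\Delta(T-s))\int_t^sD_t\mu^N_r\,dr\,ds]$. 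Using $\frac{\Delta}{\cosh(\sqrt\Delta(T-t))}\int_t^T\cosh(\sqrt\Delta(T-s))(s-t)\,ds\le2$ together with $\sup_t\E[\sup_s|D_t\mu^N_s|^2]<\infty$ gives $\sup_{t,\Delta}\E[g_t^2]<\infty$, so
\[
\int_0^T\sigma^N_t\beta_t\,dt=\sqrt\Delta\int_0^T(\sigma^N_t)^2\,dt-\sqrt\Delta\int_0^T\bigl(1-\tanh(\sqrt\Delta(T-t))\bigr)(\sigma^N_t)^2\,dt+\int_0^T\sigma^N_tg_t\,dt,
\]
where the second term (substitute $v=\sqrt\Delta(T-t)$, $1-\tanh v\le 2e^{-2v}$) and the third are $O(1)$ in $L^1$, hence $o(\sqrt\Delta)$.

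\emph{The remainder.} It remains to show $\int_0^T\bar u_t\mu^N_t\,dt$ and $\int_0^T\bar u_t\sigma^N_t\,dW_t$ are $o(\sqrt\Delta)$ in $L^1$, for which $\sup_{0\le t\le T}\|\bar u_t\|_{L^2}=O(\Delta^{1/4})$ suffices (together with $\sup_t\|\mu^N_t\|_{L^2}<\infty$, Burkholder--Davis--Gundy and Cauchy--Schwarz, and $\sup_t\E[(\sigma^N_t)^4]<\infty$ from $|\sigma^N|^2\in\cH^2$). To get this bound I would substitute $G_s=F^\Delta(s)K^N_s+H_s$, $H_s=\E_s[\int_s^Tk^\Delta(s,r)(K^N_r-K^N_s)\,dr]$, into~\eqref{eq:33} and integrate by parts in $s$, obtaining $\bar u_t=F^\Delta(t)(A_t-B_t)+H_t$, where $A_t$ is the fast Ornstein--Uhlenbeck process $dA_t=-F^\Delta(t)A_t\,dt+dK^N_t$, $A_0=0$ (equivalently $A_t=\cosh(\sqrt\Delta(T-t))\int_0^t\cosh(\sqrt\Delta(T-s))^{-1}dK^N_s$) and $B_t=\cosh(\sqrt\Delta(T-t))\int_0^t\cosh(\sqrt\Delta(T-s))^{-1}H_s\,ds$. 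Splitting $A=A^\sigma+A^\mu$ along $dK^N=\mu^N\,dt+\sigma^N\,dW$ and using $\cosh(\sqrt\Delta(T-t))/\cosh(\sqrt\Delta(T-s))\le 2e^{-\sqrt\Delta(t-s)}$ for $s\le t$, one finds $\|A^\sigma_t\|_{L^2}=O(\Delta^{-1/4})$ (the stationary size of a fast OU process driven by $\sigma^N\,dW$), $\|A^\mu_t\|_{L^2},\|B_t\|_{L^2}=O(\Delta^{-1/2})$, $\|H_t\|_{L^2}=O(1)$, whence $\|\bar u_t\|_{L^2}\le F^\Delta(t)\|A_t-B_t\|_{L^2}+\|H_t\|_{L^2}=O(\Delta^{1/4})$. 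Collecting the three integrals gives $\rho_d\Delta\,\mathcal{C}=\sqrt\Delta\int_0^T(\sigma^N_t)^2\,dt+o(\sqrt\Delta)$ in $L^1$, which is~\eqref{eq:limit2}.

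\emph{Main obstacle.} The crux is the identification of the volatility $\beta$ of the martingale part $\bar M$ of the FBSDE solution and the proof that its leading part is \emph{exactly} $F^\Delta(t)\sigma^N_t$ with an $L^2$-bounded (uniformly in $\Delta$) remainder: this is precisely where the Malliavin-differentiability hypotheses and the Clark--Ocone representation are indispensable, and it is the source of the $O(\sqrt\lambda)$ scaling, since for smooth demand ($\sigma^N=0$) the covariation $[K^N,\bar M]_T$ vanishes and the cost is an order of magnitude smaller (cf.~Lemma~\ref{lem:costsasym1}), whereas here $[K^N,\bar M]_T\sim\sqrt\Delta\int_0^T(\sigma^N_t)^2\,dt$ dominates the whole expression. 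The secondary technical burden is the uniform estimate $\|\bar u_t\|_{L^2}=O(\Delta^{1/4})$, which hinges on the "fast tracking" bound for $A-B$ and on a careful treatment of the boundary layers of width $\sim\Delta^{-1/2}$ near $t=0$ and $t=T$, where the approximate-identity behaviour of the hyperbolic kernels degenerates.
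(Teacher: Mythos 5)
Your proposal is correct and follows the same core strategy for the leading-order term as the paper, but departs genuinely in the treatment of the remainder.

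\emph{Leading term.} Both you and the paper (i) integrate by parts in $\int_0^T K^N_t\,d\bar u_t$ using $K^N_0=\bar u_T=0$, (ii) observe that the covariation $[K^N,\bar u]_T$ equals $[K^N,\bar K^N]_T$ because $F^\Delta(t)\bar U_t$ is absolutely continuous, and (iii) compute the martingale density of $\bar K^N$ via the Clark--Ocone formula applied to $\Phi=\Delta\int_0^T\cosh(\sqrt\Delta(T-s))K^N_s\,ds$. The paper then establishes the limit $\Delta^{-1/2}\E_t[D_t\Phi]/\cosh(\sqrt\Delta(T-t))\to\sigma^N_t$ by dominated convergence and verifies uniform integrability; you do the same thing more explicitly, peeling off $F^\Delta(t)\sigma^N_t$ and bounding the correction $g_t$ through the uniform assumption on $D_t\mu^N$. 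Note that $\E_t[\int_t^s D_t\sigma^N_r\,dW_r]=0$ kills the $\sigma$-part of $D_tK^N_s-\sigma^N_t$, exactly as you say. This is essentially the paper's argument re-expressed as an explicit expansion with error control, which is fine.

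\emph{Remainder.} Here the routes truly differ. The paper bounds $|\Delta^{-1/2}\bar u_t|$ pathwise via the $\alpha$-H\"older modulus $R_\alpha$ of $K^N$ for $\alpha<1/4$, invoking a Kolmogorov-type estimate from Friz--Victoir and a footnote caveat about adapting that theorem, and arrives at the bound $|\bar u_t|\le 4\sqrt\Delta(e^{-\sqrt\Delta t}+\Delta^{-\alpha/2}+e^{-2\sqrt\Delta(T-t)})M^\alpha$, which is of order $\Delta^{(1-\alpha)/2}$ in the bulk. You instead derive a clean algebraic identity $\bar u_t=F^\Delta(t)(A_t-B_t)+H_t$, with $A$ the fast-OU tracker of $K^N$, $H$ the local fluctuation $\E_t[\int_t^T k^\Delta(t,r)(K^N_r-K^N_t)\,dr]$, and $B$ the accumulated $H$; I have checked this decomposition against \eqref{eq:30}--\eqref{eq:33} and it is correct. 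The bound $\|A^\sigma_t\|_{L^2}=O(\Delta^{-1/4})$ is the stationary size of a fast OU driven by $\sigma^N\,dW$, $\|A^\mu_t\|_{L^2}=O(\Delta^{-1/2})$, and $\|H_t\|_{L^2}=O(1)$ hold --- indeed $H_t=\E_t[\int_t^T\mu^N_s(\int_s^Tk^\Delta(t,r)\,dr)\,ds]$ because the stochastic-integral part of $K^N_r-K^N_t$ vanishes under $\E_t$, so $H$ inherits only the drift of $K^N$ --- giving $\sup_t\|\bar u_t\|_{L^2}=O(\Delta^{1/4})$. This is more elementary than the paper's rough-path estimate, avoids the Friz--Victoir citation entirely, and yields a slightly sharper rate. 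One small point to tighten: for $\E|\int_0^T\bar u_t\sigma^N_t\,dW_t|=o(\sqrt\Delta)$, combine BDG with the factorization $(\int_0^T\bar u_t^2(\sigma^N_t)^2\,dt)^{1/2}\le(\int_0^T\bar u_t^2\,dt)^{1/2}\sup_t|\sigma^N_t|$ and then Cauchy--Schwarz with $\E[\sup_t|\sigma^N_t|^2]<\infty$ (implied by $|\sigma^N|^2\in\cH^2\subset\cS^2$), rather than trying to use $\sup_t\E[(\sigma^N_t)^4]$ via an $L^4$ bound on $\bar u_t$ that your OU decomposition, as stated, does not deliver. With that adjustment the whole argument goes through.
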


\begin{rem}
The regularity conditions of Proposition~\ref{prop.costsemi} are satisfied, in particular, if the demand $K^N$ is the solution of a scalar stochastic differential equation whose drift and diffusion coefficients are twice continuously differentiable with bounded derivatives of orders $0,1,2$. In this case, the required bounds for the Malliavin derivatives follow from \cite[Theorem 2.2.1]{nualart2006malliavin}. 
\end{rem}

\subsection{Competitive Dealer Markets}\label{ss:segmented}

Next, we apply Theorem~\ref{thm:Nash} to study liquidity in a competitive dealer market that we model by a large number of small homogenous dealers and clients. For simplicity, suppose that there are no noise traders ($K^N=0$), but $m_d M \in \mathbb{N}$ dealers with common risk tolerance $\rho_d$ and $m_c M$ clients with common risk tolerance $\rho_c>0$, all with equal mass $1/((m_d+m_c)M)$. That is, the dealers and clients make up fractions
$$q_d=\frac{m_d}{m_d+m_c} \quad \mbox{and} \quad q_c=\frac{m_c}{m_c+m_d}$$ 
of the total number $(m_d+m_c)M$ of agents. This allows us to study the limiting behaviour of equilibrium prices and trading strategies as the number of agents becomes large for $M \to \infty$, while the fractions of dealers and clients remains fixed.\footnote{We are grateful to an anonymous referee for prompting us to pursue this extension with general rather than equal proportions of dealers and clients.} The clients have a common trading target $\xi^c \in  \cS^2$, whereas the dealers have no trading targets ($\xi^d=0$) and therefore only trade to earn premia for providing liquidity. 

As is natural for an open market that describes interdealer trading, we assume that the individual trading frictions are zero for all dealers ($\lambda^a=0$) and infinite for the customers ($\lambda^a=\infty$). In summary, we then have 
\begin{gather*}
\bar{\rho}=q_c \rho_c+q_d\rho_d, \quad \bar{\eta}=\frac{m_d M}{\lambda}, \quad   \bar{\xi}=q_c\xi^c,\quad \Delta=\frac{1}{(q_c\rho_c+q_d\rho_d)\lambda(1+\frac{1}{m_d M})}.
\end{gather*}
As the number of dealers and clients becomes large for $M \to \infty$ (with the proportions $q_d$ and $q_c$ of dealers and clients remaining fixed), $\Delta$ therefore converges to a nonzero and finite limit,
$$\Delta_\infty=\frac{1}{(q_c\rho_c+q_d\rho_d)\lambda}.$$

Let us now compare this to a market with the same proportions $q_c$ of clients and $q_d$ of dealers, but where the $m_d M$ small dealers are replaced by a single dealer with the same aggregate mass $q_d$ and risk tolerance $\rho_d$. Then the above expressions for $\bar{\rho}$, $\bar{\xi}$ remain unchanged, but we have $\bar{\eta}=1/\lambda$ and in turn
$$
\Delta_1=\frac{1}{(q_c\rho_c+q_d\rho_d)2\lambda}.
$$
As the equilibrium prices in Theorem~\ref{thm:Nash} only depend on the aggregate trading target $\bar{\xi}$ (which is the same in both cases) and the parameter $\Delta$, we see that the market populated by many small dealers is equivalent to a market with a single representative dealer, if the liquidity cost in the competitive market is rescaled by a factor of two. Accordingly, the positive effects (for the clients) of competition (which drives down each dealer's profits) outweigh the negative effects of uncoordinated trading in the open market (which leads to excess trading because agents don't internalize the price impact costs they cause for others).

To illustrate the implications of these results on optimal trading strategies and equilibrium prices, we now specialize the discussion to two concrete examples: optimal execution as in \cite{bertsimas.lo.98,almgren.chriss.01,obizhaeva.wang.13} and diffusive trading targets as in \cite{lo.al.04,sannikov.skrzypacz.16,choi.al.19}.

\begin{example}[Optimal Liquidation]\label{ex:liq}
We first consider the simplest example where the clients' target is to
sell a certain number of shares, that is, $\xi^c_t \equiv \xi^c <0$, $0 \leq t \leq T$.\footnote{Complete liquidation as in \cite{bertsimas.lo.98,almgren.chriss.01,obizhaeva.wang.13} could be promoted using a quadratic liquidation penalty as in \cite{cartea.jaimungal.16,bank.voss.16} or enforced by a hard terminal constraint as in \cite{bank.al.17}. To ease notation, we do not pursue this here.} In this case, two elementary integrations show that 
\begin{align*}
\bar{U}_t=\mathbf{U}^{\Delta}_t(\bar{\xi}) &=\frac{1}{\Delta}\int_0^t k^{\Delta}(s,t)\int_s^T \left(k^{\Delta}(s,r) q_c\xi^c\right) dr  ds \\
&= \sqrt{\Delta}\cosh\left(\sqrt{\Delta}(T-t)\right)\int_0^t \frac{\sinh(\sqrt{\Delta}(T-s))}{\cosh^2(\sqrt{\Delta}(T-s))}ds\ q_c\xi^c\\
&= \left(1-\frac{\cosh(\sqrt{\Delta}(T-t))}{\cosh(\sqrt{\Delta}T)}\right) q_c \xi^c.
\end{align*}
As a consequence, the optimal client position from~\eqref{eq:15} is
\begin{align*}
K^c_t &= \xi^c+\frac{\rho_c}{q_c\rho_c+q_d\rho_d}\left(\bar{U}_t-q_c\xi^c\right)\\
&=\xi^c\left(1 -\frac{q_c\rho_c}{q_c\rho_c+q_d\rho_d}\frac{\cosh(\sqrt{\Delta}(T-t))}{\cosh(\sqrt{\Delta}T)}\right).
\end{align*}
This means that the clients use a bulk trade at time $t=0$ to sell a fraction of their trading target equal to their share $q_c\rho_c$ of the total holding costs $q_c\rho_c+q_d\rho_d$. With an open interdealer market, they subsequently continue selling at an absolutely continuous rate, as the dealers gradually pass on their positions. Whence, the clients' optimal trading path resembles the one in the model of Obizhaeva and Wang~\cite{obizhaeva.wang.13} with transient price impact. The absolutely continuous trading rate is determined by the trading costs $\lambda$ in the open market, dealers' and clients' share of the total risk tolerance, and the numbers of dealers and clients through the constant $\Delta$, with low trading costs, low risk tolerances, and a large number of dealers lead to faster trading. This is illustrated in the left panel of Figure~\ref{fig1}.

\begin{figure}
  \centering
  \includegraphics[width=.45\linewidth]{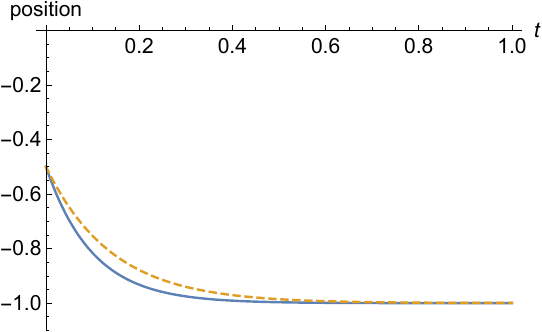}
  \includegraphics[width=.45\linewidth]{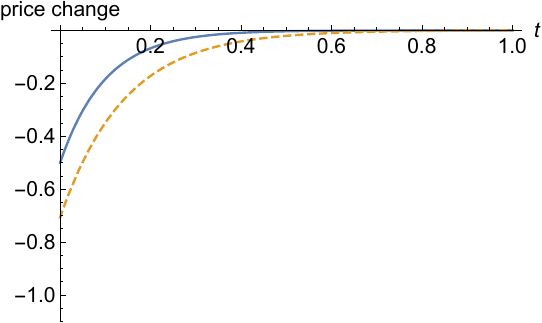}
\caption{Optimal liquidation strategies (left panel) and price changes (right panel) for a single dealer (dotted) and infinitely many dealers (solid). Parameters are $\lambda=0.1$, $\rho_c=\rho_d=0.1$, $T=1$, $\xi_c=-1$, and the dealers have half the total risk tolerance in each case.}
\label{fig1}
\end{figure}

Let us now turn to the equilibrium price for the dealer market. By Theorem~\ref{thm:Nash}(iii), 
\begin{align*}
S_t = \E_t[\scr{D}_T]-\frac{1}{\bar{\rho}}\E_t\left[\int_t^T (\bar{U}_s-\bar{\xi}_s) ds\right] &=\scr{D}_t +\frac{q_c\xi^c}{q_c\rho_c+q_d\rho_d} \int_t^T \frac{\cosh(\sqrt{\Delta}(T-s))}{\cosh(\sqrt{\Delta}T)}ds\\
&= \scr{D}_t +\frac{q_c\xi^c}{q_c\rho_c+q_d\rho_d}  \frac{\sinh(\sqrt{\Delta}(T-t))}{\sqrt{\Delta}\cosh(\sqrt{\Delta}T)}.
\end{align*}
With customers that want to liquidate a position in the dealer market
($\xi^c<0$), the risky asset trades at a price below its expected
dividend. The corresponding positive risk premium is earned by the
dealers for providing liquidity to the clients as the latter liquidate their position. Like for the clients' optimal positions the transience of the price deviation from the asset's expected payoff is modulated by the constant $\Delta$. Without an open market ($\lambda=\infty$ so that $\Delta=0$), the price impact is permanent; as the open market becomes more and more liquid, the price impact of the initial bulk trade disappears faster and faster as dealers quickly unwind their positions. Moreover, as illustrated in the right panel of Figure~\ref{fig1}, the price impact is decreasing in the number of dealers when their proportion of the total risk tolerance in the economy is held fixed. 

In this example, the volatility of the asset price in the dealer market remains unaffected because the clients' demand is deterministic. This will be different in the subsequent example with random demands. 
\end{example}

\begin{example}[Diffusive Trading Targets]\label{ex:dif}
To explore the other end of the spectrum of potential target strategies, suppose as in \cite{lo.al.04,sannikov.skrzypacz.16,choi.al.19} that the clients have ``high-frequency trading needs''. This is modelled by a target position $\xi^c$ following Brownian motion with volatility $\sigma_\xi$. For such a martingale, we have
\begin{align}
\bar{U}_t=\mathbf{U}^{\Delta}_t(\bar{\xi}) &=\frac{1}{\Delta}\int_0^t k^{\Delta}(s,t)\int_s^T \left(k^{\Delta}(s,r) q_c\xi_s^c\right) dr  ds  \notag\\
&= q_c \sqrt{\Delta}\cosh\left(\sqrt{\Delta}(T-t)\right)\int_0^t \left(\frac{\sinh(\sqrt{\Delta}(T-s))}{\cosh^2(\sqrt{\Delta}(T-s))} \xi_s^c\right)ds. \label{eq:exU}
\end{align}
As a consequence, the clients' optimal position from~\eqref{eq:15} is a convex combination of a diffusive and a smooth component,
$$
K^c_t= \frac{q_d\rho_d}{q_c\rho_c+q_d\rho_d}\xi^c_t+\frac{q_c\rho_c}{q_c\rho_c+q_d\rho_d} \sqrt{\Delta}\cosh\left(\sqrt{\Delta}(T-t)\right)\int_0^t \frac{\sinh(\sqrt{\Delta}(T-s))}{\cosh^2(\sqrt{\Delta}(T-s))}\xi_s^c ds.
$$
To wit, the clients directly implement a fraction of their trading target equal to the dealers' share of the total risk tolerance. To reduce their remaining risk, they also trade in an absolutely continuous-manner. This gradually pulls the remaining deviation from their trading targets towards zero, as can be seen from the dynamics
$$
dK^c_t= F^{\Delta}(t) \left(\xi^c_t -K^c_t\right)dt+\frac{q_d\rho_d}{q_c\rho_c+q_d\rho_d}d\xi^c_t.
$$
These trading strategies are illustrated in the left panel of Figure~\ref{fig2}, which shows that the number of dealer only has a modest impact on the clients' optimal positions relative to their target positions.

\begin{figure}
  \centering
  \includegraphics[width=.45\linewidth]{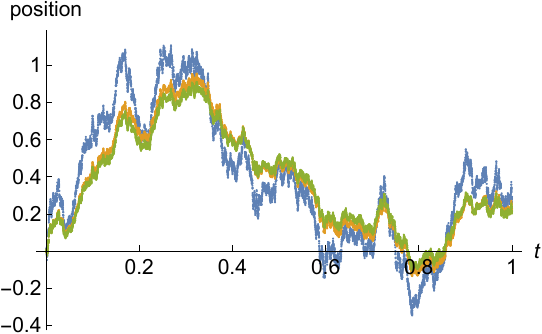}
    \includegraphics[width=.45\linewidth]{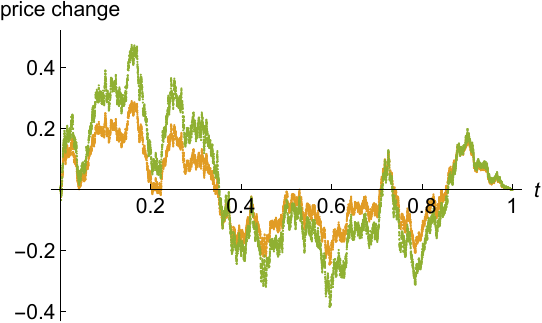}
\caption{Left panel: simulated target position (blue) and corresponding optimal position in dealer market with infinitely many small dealers (orange) and a single dealer (green), with half the total risk tolerance in each case. Right panel: Price adjustments in equilibrium with infinitely many small dealers (orange) and a single dealer (green). Parameters are $\lambda=0.1$, $\rho_c=\rho_d=0.1$, $T=1$, and $\sigma_\xi=1$.}
\label{fig2}
\end{figure}

Let us now turn to the corresponding equilibrium prices in the dealer market. In view of Theorem~\ref{thm:Nash}(iii), we have
\begin{align}\label{eq:priceex}
S_t &= \scr{D}_t +\frac{1}{\bar{\rho}}\E_t\left[\int_t^T (\bar{\xi}_s-\bar{U}_s) ds\right].
\end{align}
After recalling~\eqref{eq:exU}, differentiation shows that the process $\bar{\xi}_t-\bar{U}_t$ has Ornstein-Uhlenbeck-type dynamics,
$$
d(\bar{\xi}_t-\bar{U}_t)= -F^{\Delta}(t)(\bar{\xi}_t-\bar{U}_t)dt+q_c d\xi^c_t.
$$
As a consequence, the price adjustment $S_t-\scr{D}_t$ relative to the expected terminal dividend is 
\begin{align*}
\frac{1}{\bar{\rho}}\E_t\left[\int_t^T (\bar{\xi}_s-\bar{U}_s) ds\right] &= \frac{(\bar{\xi}_t-\bar{U}_t)}{\bar{\rho}}\int_t^T e^{-\int_t^s F^{\Delta}(r)dr}ds \\
 &=  \frac{(\bar{\xi}_t-\bar{U}_t)}{\bar\rho}\int_t^T \frac{\cosh(\sqrt{\Delta}(T-s))}{\cosh(\sqrt{\Delta}(T-t))}ds=  \frac{F^{\Delta}(t)(\bar{\xi}_t-\bar{U}_t)}{\Delta\bar\rho}.
\end{align*}
Differentiation in turn shows that the dynamics of the price adjustment are
\begin{align*}
d(S_t-\scr{D}_t) &=\frac{1}{\Delta \bar\rho} \left(\frac{d}{dt} F^{\Delta}(t)-F^{\Delta}(t)^2\right)(\bar{\xi}_t-\bar{U}_t)dt+\frac{F^{\Delta}(t)}{\Delta\bar{\rho}}q_cd\xi^c_t\\
&=-\frac{1}{\bar\rho}(\bar{\xi}_t-\bar{U}_t)dt+\frac{F^{\Delta}(t)}{\Delta\bar{\rho}}q_cd\xi^c_t\\
&= -\frac{\Delta}{F^{\Delta}(t)}(S_t-\scr{D}_t)dt+\frac{F^{\Delta}(t)}{\Delta\bar{\rho}}q_cd\xi^c_t.
\end{align*}
Price deviations are pulled towards zero, with fluctuations driven by the clients trading targets. Far from maturity (or in the large-liquidity limit), $F^{\Delta}(t) \approx \sqrt{\Delta}$ so that the price correction thus approximately has Ornstein-Uhlenbeck dynamics:
$$
d(S_t-\scr{D}_t) \approx -\sqrt{\Delta}(S_t-\scr{D}_t)dt+\frac{1}{\sqrt{\Delta}}\frac{q_c}{q_c\rho_c+q_d\rho_d}d\xi^c_t
$$
Accordingly, fluctuations in clients' demands have a smaller initial
effect and decay faster if $\Delta$ is large for a liquid open
market (small $\lambda$) as well as for many or for more risk tolerant agents. Close to maturity, price impact tends to zero and the strength $\Delta/F^{\Delta}(t)$ with which prices are pulled towards fundamentals explodes, so that the equilibrium price in the dealer market approaches the exogenous terminal payoff of the risky asset as illustrated in the right panel of Figure~\ref{fig2}.

With randomly fluctuating client demand, not just the expected return but also the volatility of the equilibrium price can change relative to the expected dividend. The above calculations show that the magnitude of this effect is governed by $\approx \sigma_\xi q_c/\sqrt{\Delta}\bar{\rho}$; its sign is in turn determined by the correlation between the expected dividend and the clients' demand. If this correlation is positive, i.e., clients' demand tends to increase when expected fundamentals rise, then illiquidity (caused by an illiquid open market and few or risk-averse dealers) increases volatility. If the correlation is negative, the sign is reversed. The interpretation is that demand pressure and fundamental shocks partially offset in the second case, whereas they magnify price fluctuations in the first case. Similar comparative statics also appear in a Radner equilibrium for a market with an exogenous quadratic deadweight costs on trading~\cite{herdegen.al.19}.
\end{example}

\subsection{The Effects of Segmentation}

We now discuss the effects of segmentation between the dealer and open markets. To this end, we consider how the clients' optimal positions and welfare change when they gain access to the open market. Compared to the discussion in the previous section, this means that their individual trading costs in the open market now are finite; to ease notation, we focus on the case where they vanish just like for the dealers ($\lambda^a=0$). Then, we have $\bar{\eta}=(m_c+m_d)M/\lambda$ and the corresponding parameter determining trading and equilibrium prices in the integrated market is larger than for the segmented market considered in Section~\ref{ss:segmented},
$$
\Delta^{\mathrm{int}}=\frac{1}{(q_c\rho_c+q_d\rho_d)\lambda(1+\frac{1}{(m_c+m_d)M})} > \frac{1}{(q_c\rho_c+q_d\rho_d)\lambda(1+\frac{1}{m_dM})} = \Delta.
$$
Let us now discuss what this implies for our concrete examples. For optimal liquidation, Theorem~\ref{thm:Nash}(i) and (iii) show that the total optimal position the clients take in the dealer and in the open market is 
$$
K_t^{c,\mathrm{int}}=\frac{q_d\rho_d}{q_c\rho_c+q_d\rho_d}\xi^c+q_d\frac{\rho_c-\rho_d}{q_c\rho_c+q_d\rho_d}\bar{U}^{\mathrm{int}}_t,
$$
where
$$
\bar U_t^{\mathrm{int}}=\left(1-\frac{\cosh(\sqrt{\Delta^{\mathrm{int}}}(T-t))}{\cosh(\sqrt{\Delta^{\mathrm{int}}}T)}\right)q_c \xi^c.
$$
Hence, the initial block trade remains unchanged compared to the segmented market. However, the clients
subsequently build up larger positions in the integrated market with parameter $\Delta^{\mathrm{int}}$ rather than $\Delta$. Similarly, it can be shown that also clients with diffusive trading targets share the same fraction of their Brownian shocks with the dealers, but increase the order flow to the open market if they have direct access themselves. In the corresponding equilibrium price dynamics, the initial price impact in the liquidation model disappears faster; for diffusive trading targets, price impact is both smaller and disappears faster.

Finally, let us discuss the effect of segmentation on the clients' welfare as measured by their goal functionals'~\eqref{eq:7}.  In view of Theorem~\ref{thm:Nash}(i,ii,iii), we have
\begin{align*}
J^{c,\mathrm{int}} =& \mathbb{E}\left[\int_0^T \left(\frac{1}{\bar{\rho}}(\bar{U}^{\mathrm{int}}_t-\bar{\xi}_t)K_t^{c,\mathrm{int}}-\frac{1}{2\rho_c}({\xi}_t^c-\bar{U}^{\mathrm{int}}_t-K^{c,\mathrm{int}}_t)^2-\lambda \left(\frac{d}{dt}\bar{U}_t^{\mathrm{int}}\right)^2\right)dt\right].
\end{align*}
Similarly, we can also compute the value of the goal functional in the case the clients do \emph{not} have access to the open market,
\begin{align*}
J^c=& \mathbb{E}\left[\int_0^T \left(\frac{1}{\bar{\rho}}(\bar{U}_t-\bar{\xi}_t)K_t^c-\frac{1}{2\rho_c}(K^c_t-{\xi}_t^c)^2\right)dt\right].
\end{align*}
 For the optimal liquidation example, we have 
 \begin{align*}
 J^{c,\mathrm{int}} =&-(\xi^c)^2\int_0^T \frac{q_c q_d}{\bar{\rho}}\frac{\cosh(\sqrt{\Delta^{\mathrm{int}}}(T-t))}{\cosh(\sqrt{\Delta^{\mathrm{int}}}T)}\left(1+\frac{q_c(\rho_d-\rho_c)}{\bar{\rho}}\frac{\cosh(\sqrt{\Delta^{\mathrm{int}}}(T-t))}{\cosh(\sqrt{\Delta^{\mathrm{int}}}T)}\right)  \\
&\quad\quad\quad\quad+\frac{q_c^2\rho_c}{2\bar{\rho}^2}\left(\frac{\cosh(\sqrt{\Delta^{\mathrm{int}}}(T-t))}{\cosh(\sqrt{\Delta^{\mathrm{int}}}T)}\right)^2+{\lambda q_c^2 \Delta^{\mathrm{int}}} \left(\frac{\sinh(\sqrt{\Delta^{\mathrm{int}}}(T-t))}{\cosh(\sqrt{\Delta^{\mathrm{int}}}T)}\right)^2dt.
\end{align*}

\begin{figure}
  \centering
  \includegraphics[width=.45\linewidth]{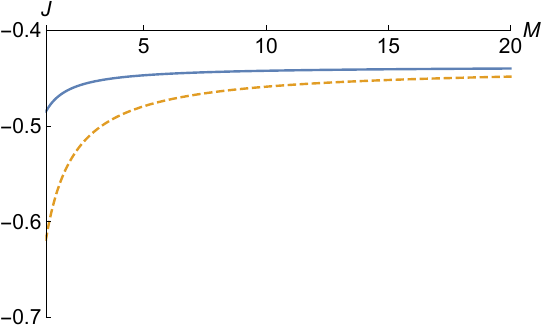}
\caption{Clients' welfare with segmentation (dashed) and in the integrated market (solid) with $M$ dealers and $M$ clients ($q_c=q_d=1/2$). Other parameters are $\lambda=0.1$, $\rho_c=\rho_d=0.1$, $T=1$, and $\xi^c=-1$.}
\label{fig:3}
\end{figure}

An elementary integration in turn gives
 \begin{align*}
 J^{c,\mathrm{int}} =&-\frac{(\xi^c)^2q_c}{\bar \rho}\left(\frac{\tanh(\sqrt{\Delta^{\mathrm{int}}}T)}{\sqrt{\Delta^{\mathrm{int}}}}q_d+\left(\bar \rho-\frac{\rho_c}{2}\right)\frac{q_c}{4\bar \rho\sqrt{\Delta^{\mathrm{int}}}}\frac{\sinh(2\sqrt{\Delta^{\mathrm{int}}}T)   +2\sqrt{\Delta^{\mathrm{int}}}T}{\cosh^2(\sqrt{\Delta^{\mathrm{int}}}T)}\right)  \\
&-(\xi^c)^2\frac{\lambda q_c^2\sqrt{\Delta^{\mathrm{int}}}}{4} \frac{\sinh(2\sqrt{\Delta^{\mathrm{int}}}T) -2\sqrt{\Delta^{\mathrm{int}}}T  }{\cosh^2(\sqrt{\Delta^{\mathrm{int}}}T)}
 \end{align*}
Similarly,
\begin{align*}
J^c&=-\frac{(\xi^c)^2 q_c}{\bar{\rho}}\int_0^T\frac{\cosh(\sqrt{\Delta}(T-t))}{\cosh(\sqrt{\Delta}T)}\left(1-\frac{q_c\rho_c}{2\bar{\rho}}\frac{\cosh(\sqrt{\Delta}(T-t))}{\cosh(\sqrt{\Delta}T)}\right)dt\\
&=-\frac{(\xi^c)^2 q_c}{\bar{\rho}}\left(\frac{\tanh(\sqrt{\Delta}T)}{\sqrt{\Delta}}-\frac{q_c\rho_c}{2\bar{\rho}}\frac{\sinh(2\sqrt{\Delta}T)+2\sqrt{\Delta}T}{4\sqrt{\Delta} \cosh^2(\sqrt{\Delta}T)}\right).
\end{align*}
Now suppose the number clients and dealers is large ($M \to \infty$), whereas the proportions of each group in the total population remain fixed (fixed $m_c,m_d$ and in turn $q_c,q_d,\bar{\rho}$). Then, $\Delta,\Delta^{\mathrm{int}}\to \Delta_\infty= 1/\bar{\rho}\lambda$ and in turn
\begin{align*}
\frac{J^c}{J^{c,\mathrm{int}}} &\to \frac{\tanh(\sqrt{\Delta_\infty}T)-\frac{q_c\rho_c}{2\bar{\rho}}\frac{\sinh(2\sqrt{\Delta_\infty}T)+2\sqrt{\Delta_\infty}T}{4\cosh^2(\sqrt{\Delta_\infty}T)}}{\tanh(\sqrt{\Delta_\infty}T)q_d+\left(\bar \rho-\frac{\rho_c}{2}\right)\frac{q_c}{4\bar \rho}\frac{\sinh(2\sqrt{\Delta_\infty}T)   +2\sqrt{\Delta_\infty}T}{\cosh^2(\sqrt{\Delta_\infty}T)} +\frac{q_c}{4} \frac{\sinh(2\sqrt{\Delta_\infty}T) -2\sqrt{\Delta_\infty}T  }{\cosh^2(\sqrt{\Delta_\infty}T)}}\\
&=\frac{\tanh(\sqrt{\Delta_\infty}T)-\frac{q_c\rho_c}{2\bar{\rho}}\frac{\sinh(2\sqrt{\Delta_\infty}T)+2\sqrt{\Delta_\infty}T}{4\cosh^2(\sqrt{\Delta_\infty}T)}}{\tanh(\sqrt{\Delta_\infty}T)q_d-\frac{q_c\rho_c}{2\bar{\rho}}\frac{\sinh(2\sqrt{\Delta_\infty}T)+2\sqrt{\Delta_\infty}T}{4\cosh^2(\sqrt{\Delta_\infty}T)}+\frac{q_c}{2}\frac{\sinh(2\sqrt{\Delta_\infty}T)}{\cosh^2(\sqrt{\Delta_\infty}T)}}=1, \quad \mbox{as $M\to \infty$.}
\end{align*}
(Here, we have used $\sinh(2x)=2\sinh(x)\cosh(x)$, $\tanh(x)=\sinh(x)/\cosh(x)$, and $q_c+q_d=1$ in the last step.) This shows that, in the competitive limit of many small clients and dealers,  the welfare loss due to segmentation  disappears, as illustrated in Figure~\ref{fig:3}. Maybe surprisingly, this suggests that market segmentation can be close to optimal as long as none of the agents has substantial market power.

\appendix

\section{Proof of Theorem~\ref{thm:Nash}}\label{s:proof}

\begin{proof}[Proof of Theorem~\ref{thm:Nash}]
  Let $S \in \cH^2$, $(K^a,u^a) \in \cS^2\times \cL^2$, $a \in \cA$,
  be \emph{any} equilibrium. It is easy to see that each agent's
  target functional $J^a(K,u;\bar{u}^{-a},S)$ is strictly concave in
  $(K,u) \in \cS^2 \times \cL^2$, and so its maximizer $(K^a,u^a)$ is
  uniquely determined by the first-order condition that all
  directional derivatives vanish. In particular, for any $K \in \cS^2$ we must have
  \begin{align}
    \label{eq:17}
     0 = \nabla_K J^a(K^a,u^a;\bar{u}^{-a},S) = \E \left[\int_0^T K_t dA_t +
    \int_0^T \frac{1}{\rho^a}(\xi^a_t-K^a_t-U^a_t) K_t dt\right].
  \end{align}
  This can only hold true if $A$ is absolutely continuous with density 
  \begin{align}
    \label{eq:18}
      \frac{dA_t}{dt} = \mu_t =\frac{1}{\rho^a}\left(U^a_t+K^a_t-\xi^a_t\right), \quad
    0 \leq t \leq T, \quad a \in \mathcal{A}.
  \end{align}
   Solving for $K^a$, we find 
   \begin{align}
     \label{eq:26}
        K^a = \xi^a-U^a+\rho^a \mu.
   \end{align}
   Hence, after aggregating over $a \in \cA$, the
   market clearing condition $K^N + \sum_{a \in \cA} m(a) K^a=0$ implies $- K^N = \bar{\xi}-\bar{U}+\bar{\rho} \mu$ and in turn~\eqref{eq:13}. 

   The first-order conditions for individual optimality also require that, for any $u \in \cL^2$,
   \begin{align*}
      0 = & \nabla_u
     J(K^a,u^a;\bar{u}^{-a},S)\\=&-\E\left[\int_0^Tu_t\left(\lambda
     \bar{u}_t^{-a}+(2m(a)\lambda+\lambda^a)u^a_t+\E_t\left[\int_t^T
     \frac{1}{\rho^a}\left(K^a_s+U^a_s-\xi^a_s\right)ds
\right]\right)dt\right].
   \end{align*}
   Since this equality needs to hold for any perturbation $u  \in \cL^2$, it is equivalent to
   \begin{align}
     \label{eq:21}
     \lambda
     \bar{u}_t^{-a}+(2m(a)\lambda+\lambda^a)u^a_t+\E_t\left[\int_t^T
     \frac{1}{\rho^a}\left(K^a_s+U^a_s-\xi^a_s\right)ds
     \right]=0, \quad
     0 \leq t \leq T.
   \end{align}
   Recalling that $\eta^a=1/(m(a)\lambda+\lambda^a)$, the definition of $\bar{u}$ and~\eqref{eq:18}, this implies
   \begin{align}
     \label{eq:22}
      u_t^a = -\eta^a\left(\lambda \bar{u}_t+\E_t\left[\int_t^T\mu_sds\right]\right) , \quad
    0 \leq t \leq T.
   \end{align}
    Now, we aggregate over $a \in \cA$ to obtain
    \begin{align*}
       \bar{u}_t = -\bar{\eta} \left(\lambda\bar{u}_t+\E_t\left[\int_t^T\mu_sds\right]\right) , \quad
    0 \leq t \leq T.
    \end{align*}
    Solving for $\bar{u}$ and using the already established
    relation~\eqref{eq:13} along with 
    $\Delta = \bar{\eta}/(\bar{\rho}(1+\bar{\eta}\lambda))$, we find
     \begin{align}
       \label{eq:24}
       \bar{u}_t =-\Delta
       \E_t\left[\int_t^T\left(\bar{U}_s-K^N_s-\bar{\xi}_s\right)ds\right], \quad
       0 \leq t \leq T.
     \end{align}
     Therefore, the pair $(u,U):=(\bar{u},\bar{U})$ solves the linear
     FBSDE~\eqref{eq:25} for $X=K^N+\bar{\xi}$. As this FBSDE is
     uniquely solved by $(\mathbf{u}^\Delta(X),\mathbf{U}^\Delta(X))$
     from~\eqref{eq:30}, this pins down $\bar{U}$ as stated
     in~\eqref{eq:8} with density $\bar{u}$ as given in~\eqref{eq:24}.
     From~\eqref{eq:22}, we infer that each individual agent's
     strategy $u^a$, $a \in \cA$, is the multiple
     $\eta^a=1/(m(a)\lambda+\lambda^a)$ of the same universal
     process. Since $\bar{u}=\sum_{a \in \cA} m(a)u^a$, this implies
     $u^a = \frac{\eta^a}{\bar{\eta}} \bar{u}$ and, thus,
     $U^a=\frac{\eta^a}{\bar{\eta}} \bar{U}$ as claimed. Moreover, in light of~\eqref{eq:26}
     and~\eqref{eq:13}, our observation that
     $U^a=\frac{\eta^a}{\bar{\eta}} \bar{U}$ allows us to write each
     agent's position in the dealer market in the form~\eqref{eq:15}.  As
     a consequence, the only candidate for an equilibrium is the one
     described in the present theorem.
     
       We now show that strategies and price pinpointed above
  form indeed an equilibrium. For this purpose, define $ \bar U$ by
  \eqref{eq:8} and $\bar u=\frac{d}{dt} \bar U$ so that $(\bar u, \bar
  U)$ solves the FBSDE \eqref{eq:25}. For each $a\in \cA$ the
  candidate equilibrium position is then $K^a$ given by \eqref{eq:15}
  and the candidate equilibrium trading rate in the open market is
  $u^a=\frac{\eta_a}{\bar \eta} \bar u$. The candidate equilibrium
  price is $S$ with risk premium $\mu$ given by \eqref{eq:13}. 
 It is readily checked that the positions $K^a$, $a \in \cA$, ensure
 market clearing. Moreover, the first order condition in $K^a$ is 
  \begin{align*}
     0 = \E \left[\int_0^T K_t \mu_t dt +
    \int_0^T \frac{1}{\rho^a}(\xi^a_t-K^a_t-U^a_t) K_t dt\right], \quad \mbox{ for all $K \in \mathcal{S}^2$},
  \end{align*}
which is clearly satisfied by the candidate strategy $K^a$ by definition of the corresponding drift rate $\mu$. Using the
first-order condition satisfied by $K^a$, the first-order condition for $u^a$ then amounts to
\eqref{eq:22} by the same reasoning as above. The latter condition is
readily verified since $(\bar u, \bar U)$ were chosen as the solution
to the FBSDE \eqref{eq:25}. Concavity of the goal functional ensures
sufficiency of the first-order conditions, and so $(K^a,u^a)$ maximizes
$J^a$ as required for an equilibrium.
\end{proof}

\section{Proofs for Section~\ref{ss:asym}}\label{a:asymptotics}
By the definition of the equilibrium, the optimization criterion of each dealer is
\begin{align*}
  J^a(K^a,u^a;\bar{u},S) := \E\left[\int_0^T K^a_t dA_t - \int_0^T \lambda \bar u_t u^a_t- \frac{1}{2\rho_d}(K^a_t+U^a_t)^2dt\right].
\end{align*}
Similarly as in the Proof of Theorem \ref{thm:Nash}, after aggregating over all the agents, the first-order optimality condition for $u^a$ is 
$$       \bar{u}_t =-\frac{M}{\lambda\rho_d(M+1)}
       \E_t\left[\int_t^T\left(\bar{U}_s-K^N_s\right)ds\right], \quad
       0 \leq t \leq T.$$
Therefore $\bar u$ is an optimizer of the auxiliary minimization problem 
\begin{align}\label{eq:aux}
\min_{u}\E\left [\int_0^T \lambda  u^2_t +\frac{1}{2}\frac{M}{\rho_d(M+1)}\left(K^N_t-\int_0^t u_s ds\right)^2dt\right].
\end{align}

\begin{proof}[Proof of Proposition~\ref{prop:priceasym}]
Set
$M_t=\E_t\left[\int_0^T \mu_s ds\right]$
so that 
$d(\scr{D}_t-S_t)=dM_t-\mu_t dt.$ By the $\varepsilon$-Young inequality, the Cauchy-Schwarz inequality and Doob's maximal inequality, for all predictable $H$ bounded by $1$, we have
\begin{align*}
\E\left[\sup_{0\leq t\leq T}\left|\int_0^t H_s d(\scr{D}_s-S_s)\right|^2\right]\leq 8\E\left[ M^2_T\right]+2T\E\left[\int_0^T  |\mu_s|^2 ds\right].
\end{align*}
Together with the equality $M_T=\int_0^T \mu_s ds$ and a second application of the Cauchy-Schwarz inequality, it follows that
\begin{align*}
\E\left[\sup_{0\leq t\leq T}\left|\int_0^t H_s d(\scr{D}_s-S_s)\right|^2\right]&\leq 8\E\left[ \left(\int_0^T \mu_s ds\right)^2\right]+2T\E\left[\int_0^T  |\mu_s|^2 ds\right]\\
&\leq10T\E\left[\int_0^T  |\mu_s|^2 ds\right]\leq \frac{10T}{\bar{\rho}^2}\E\left[\int_0^T  \left(K^N_s-\bar{U}_s\right)^2 ds\right].
\end{align*}
To prove the proposition it is therefore sufficient to show that the last term converges to $0$ as $\lambda\to 0$. 
The set $\{U=\int_0^. u_s ds : u \in \scr{L}^2 \}$ is dense in
$\scr{L}^2$, so that there exists a sequence $u^n\in \scr{L}^2$ such that 
$$\E\left[\int_0^T  \left(\int_0^tu^n_sds-K^N_t\right)^2dt\right]\leq\frac{1}{n}, \quad n=1,2,\ldots$$
Due to the minimality condition \eqref{eq:aux}, we have 
\begin{align*}
\E\left[\int_0^T \frac{1}{2}\frac{M}{\rho_d(M+1)}(\bar U_t-K^N_t)^2dt\right] &\leq \E\left[\int_0^T \left(\frac{\lambda}{2}(u^n_t)^2+\frac{1}{2}\frac{M}{\rho_d(M+1)}\left(\int_0^tu^n_sds-K^N_t\right)^2\right)dt\right]\\
&\ \leq \frac{\lambda}{2}\E\left[\int_0^T (u^n_t)^2dt\right]+\frac{1}{2\rho_dn}.
\end{align*}
Thus, $\E[\int_0^T  (\bar U_t-{K}^N_t)^2dt]
\to 0$ as $\lambda \to 0$, verifying the first convergence asserted in Proposition~\ref{prop:priceasym}.

To also establish the second convergence result, we apply the inequalities of Burkholder-Davis-Gundy and H\"older to obtain 
\begin{align*}
\E\left[\left|\int_0^TK^N_t (dS_t -d\scr{D}_t)\right|\right]&\leq   \E\left[\sup_{0 \leq t \leq T} \left|\int_0^t K^N_s dM_s\right|\right]+\ E\left[\sup_{0 \leq t \leq T} \left|\int_0^t K^N_s \mu_sds\right|\right]\\
&\leq   C\left(\E\left[\left(\int_0^T ({K}^N)_s^2 d\langle M\rangle_s\right)^{1/2}\right]+ \E \left[\int_0^T |K^N_s| |\bar U_s-K^N_s|ds\right]\right)\\
&\leq C\left(\E\left[\sup_{s\in[0,T]}|K^N_s|\left(\langle M\rangle _T^{1/2}+\int_0^T  |\bar U_s-K_s^N|ds\right)\right]\right)\\
&\leq C \E\left[\sup_{s\in[0,T]}|K^N_s|^{2}\right]^{1/2}\E\left[\int_0^T  |\bar U_s-K^N_s|^2ds\right]^{1/2}.
\end{align*}
Here, $C>0$ is a constant that might change from line to line but does not depend on $\lambda$. $L^1$ convergence now follows, since we have already verified above that the last term converges to $0$ as $\lambda\to 0$.
\end{proof}

\begin{proof}[Proof of Lemma~\ref{lem:costsasym1}]
By \eqref{eq:13}, \eqref{eq:24}, and the integration by parts formula (using $K^N_0=\bar u_T$), we have
\begin{align}\label{eq:tcforabs}\int_0^T {K^N_t} d\scr{D}_t -\int_0^T {K^N_t}dS_t =-\lambda\frac{M+1}{M} \int_0^T {K^N_t} d\bar u_t = \lambda\frac{M+1}{M} \int_0^T \mu^N_t \bar u_t dt.
\end{align}
To establish~\eqref{eq:limit1}, it therefore suffices to show that
\begin{equation}\label{eq:suff1}
\int_0^T |\mu^N_t-\bar u_t|dt = o(1) \quad \mbox{in $L^1$~as $\lambda \to 0$.}
\end{equation}
Integration by parts and \eqref{eq:30} give
\begin{align}
\mu_t^N-\bar u_t&=\mu_t^N+F^{\Delta}(t)\bar U_t -\frac{{ \Delta}}{ \cosh(\sqrt{\Delta}(T-t))} \E_t\left[\int_t^T{ \cosh(\sqrt{\Delta}(T-s))}K^N_sds\right] \notag\\
&=\mu_t^N-F^{\Delta}(t)({K^N_t}-\bar U_t)-\sqrt{ \Delta}\E_t\left[\int_t^T\frac{\sinh(\sqrt{\Delta}(T-s))}{ \cosh(\sqrt{\Delta}(T-t))} \mu^N_sds\right]\\
&=\mu_t^N\left(1-\sqrt{ \Delta}\int_t^T\frac{\sinh(\sqrt{\Delta}(T-s))}{ \cosh(\sqrt{\Delta}(T-t))} ds\right)\\
&\quad-F^{\Delta}(t)({K^N_t}-\bar U_t)+\sqrt{ \Delta}\E_t\left[\int_t^T\frac{\sinh(\sqrt{\Delta}(T-s))}{ \cosh(\sqrt{\Delta}(T-t))} (\mu_t^N-\mu^N_s)ds\right]. \label{eq:ibp}
\end{align}
Now, note that 
$$\frac{\sqrt{ \Delta}\int_t^T{ {\sinh(\sqrt{\Delta}(T-s))}}ds}{ \cosh(\sqrt{\Delta}(T-t))}=1-\frac{1}{ \cosh(\sqrt{\Delta}(T-t))}.$$
Together with~\eqref{eq:ibp}, it follows that $K^N-\bar U$ satisfies the linear ODE
\begin{equation}\label{eq:odediff}
\frac{d({K^N_t}-\bar U_t)}{dt}= \mu^N_t-\bar u_t=-F^{\Delta}(t)({K^N_t}-\bar U_t)+w^\Delta(t)+\frac{\mu^N_t}{ \cosh(\sqrt{\Delta}(T-t))},
\end{equation}
where
$$
w^\Delta(t)=\E_t\left[\int_t^T\frac{\sqrt{ \Delta} \sinh(\sqrt{\Delta}(T-s))}{ \cosh(\sqrt{\Delta}(T-t))} (\mu^N_t-\mu^N_s)ds\right].
$$
Since $K^N_0=\bar U_0=0$ and by definition of the function $F^\Delta$ the explicit solution of~\eqref{eq:odediff} is
\begin{align*}
{K^N_t}-\bar U_t &= \int_0^t e^{-\int_s^t F^\Delta(u)du}\left(w^\Delta_s+\frac{\mu^{N}_s}{ \cosh(\sqrt{\Delta}(T-s))}\right) ds\\
&=\int_0^t \frac{\cosh(\sqrt{\Delta}(T-t))}{\cosh(\sqrt{\Delta}(T-s))}\left(w^\Delta_s+\frac{\mu^N_s}{ \cosh(\sqrt{\Delta}(T-s))}\right) ds.
\end{align*}
Together with~\eqref{eq:odediff}, it follows that
\begin{align}
\int_0^T |\mu^N_t-\bar u_t|dt&\leq\sup_{u \in [0,T]}|w^\Delta_u|\int_0^T \left(1+\sqrt{\Delta}\int_0^t\frac{\sinh(\sqrt{\Delta}(T-t))}{\cosh(\sqrt{\Delta}(T-s))}ds\right)dt \notag\\
&\quad+\sup_{u\in[0,T]}|\mu^{N}_u|\int_0^T \left(\frac{1}{\cosh(\sqrt{\Delta}(T-t))}+\sqrt{\Delta}\int_0^t \frac{\sinh(\sqrt{\Delta}(T-t))}{\cosh^2 (\sqrt{\Delta}(T-s))}ds\right)dt\\
&\leq\sup_{u \in [0,T]}|w^\Delta_u |\int_0^T \left(1+{\sqrt{\Delta}}\int_0^te^{-\sqrt{\Delta}(t-s)}ds\right)dt\\
&\quad+\sup_{u\in[0,T]}|\mu^{N}_u|\left(\int_0^T 2 e^{-\sqrt{\Delta}t}dt+\int_0^T \sqrt{\Delta}\int_s^T \frac{\sinh(\sqrt{\Delta}(T-t))}{\cosh^2 (\sqrt{\Delta}(T-s))}dt ds\right)\\
&\leq 2T\sup_{u\in [0,T]}|w^\Delta_u |+\frac{4}{\sqrt{\Delta}} \sup_{u\in[0,T]}|\mu^{N}_u|. \label{eq:sufficient2}
\end{align}
Write 
$$\omega(\d)=\sup_{t,s \in [0,T], |t-s|\leq \d} \Big|\mu^N(s)-\mu^{N}(t)\Big|$$ 
for the modulus of continuity of $\mu^N$. Since $t \mapsto \mu_t^N$ is continuous on the compact set $[0,T]$,
\begin{equation}\label{eq:convmod}
 \omega(\d)\to 0, \quad \mbox{a.s.~as }\d\to 0.
 \end{equation}
The definitions of $w^{\Delta}$ and $\omega$ and a change of variables yield the following estimate:
\begin{align}\label{eq:estimatew}
|w^\Delta(t)|\leq \frac{1}{2} \E_t\left[\int_t^T\sqrt{ \Delta} e^{-\sqrt{\Delta}(s-t)}\omega(s-t)ds\right]\leq \frac{1}{2} \E_t\left[\int_0^\infty e^{-u}\omega\left(\tfrac{u}{\sqrt{\Delta}}\right)du\right] :=M^\Delta_t.
\end{align}
Here, $(M^{\Delta}_t)_{t \in [0,T]}$ is a martingale for each $\Delta>0$ since $|\omega(\d)|\leq 2\sup_{s \in [0,T]} |\mu_s^N|$ is integrable by assumption. Also note that by definition of the modulus of continuity $\omega$, the mapping $\Delta \mapsto M^{\Delta}_t$ is decreasing for each $t$. Define 
$$
M^*:=\lim_{\Delta \to \infty }\sup_{t \in [0,T]}M^\Delta_t\geq 0.
$$ 
Fix $\e>0$. Then, by the monotonicity in $\Delta$, we have 
$$\P\left[M^*\geq \e\right]\leq \lim_{\Delta \to \infty}\P\left[\sup_{t\in[0,T]} M^\Delta_t\geq \e\right]\leq  \lim_{\Delta \to \infty}\frac{\E[M^\Delta_T]}{\e}=0.$$
Here, the last equality is a consequence of \eqref{eq:convmod}, another application of the monotone convergence theorem and the integrability of right-hand side in \eqref{eq:estimatew}. As a result,
$$0\leq \limsup_{\Delta\to \infty}\sup_{t\in[0,T]}|w^\Delta(t)|\leq M^*=0\mbox{ a.s.}$$
In view of \eqref{eq:sufficient2}, it follows that the asserted convergence~\eqref{eq:suff1} holds in the almost-sure sense.

To show convergence in $L^1$, it therefore suffices to establish uniform integrability of \eqref{eq:tcforabs}. By~\eqref{eq:sufficient2} and~\eqref{eq:estimatew},
\begin{align*}\int_0^T|(\mu_t^N)^2-\bar u_t\mu^N_t| dt \leq \sup_{s\in[0,T]}|\mu^N_s|\int_0^T |\mu_t^N-\bar u_t|dt
&\leq T\sup_{s\in [0,T]}|w^\Delta_s |^2+(T+\frac{4}{\sqrt{\Delta}}) \sup_{s\in[0,T]}|\mu^{N}_s|^2\\
&\leq T\sup_{s\in [0,T]}|M^\Delta_s |^2+(T+\frac{4}{\sqrt{\Delta}}) \sup_{s\in[0,T]}|\mu^{N}_s|^2.
\end{align*}
Observe that the right-hand side is decreasing in $\Delta$, and integrable for, e.g., $\Delta=1$ by Doob's maximal inequality: 
$$\E\left[T\sup_{t\in [0,T]}|M^1_t |^2+(T+4) \sup_{t\in[0,T]}|\mu^N_t|^2\right]\leq \E\left[2T|M^1_T |^2+(T+4) \sup_{t\in[0,T]}|\mu^N_t|^2\right]<\infty.$$
Therefore, the family,  $\{\int_0^T|(\mu_t^N)^2-u_t\mu^N_t| ds,\Delta\geq 1\}$  is uniformly integrable. Since
$$\int_0^T|(\mu_t^N)^2-u_t\mu^N_t| ds \leq \sup_{s\in[0,T]}|\mu^N_s|\int_0^T |\mu_s^N-u_s|ds,$$
 this implies that the almost sure convergence we have established for~\eqref{eq:suff1} also holds in $L^1$.
\end{proof}

\begin{proof}[Proof of Lemma~\ref{prop.costsemi}]
Similarly to \eqref{eq:tcforabs}, the liquidity costs of the clients can be written as
\begin{align}\label{eq:liqcosts}
\int_0^T K^N_t d\scr{D}_t - \int_0^T K^N_t dS_t& = -\lambda\frac{M+1}{M} \int_0^T K^N_t d\bar u_t \notag\\
&= \lambda\frac{M+1}{M} \langle \bar u, K^N \rangle_T+ \lambda\frac{M+1}{M} \int_0^T\bar u_t (\mu^N_tdt+\sigma^N_tdW_t),
\end{align}
and $\bar u_t={\bar K^N}_t-F^\Delta(t) \bar U_t$, where $\bar K^N$ is defined as $\bar K^N_t= \E_t\left[\int_t^T k^\Delta(t,s)K^N_sds\right] $. This implies that the covariation of $\bar u$ is the same as the one of ${\bar K^N}$. Note that by definition of $\bar K^N$, 
\begin{align}\label{eq:1}
\cosh(\sqrt{\Delta}(T-t))\bar K^N_t-\Delta\int_0^t\cosh(\sqrt{\Delta}(T-s))  K^N_sds=\Delta \E_t\left[\int_0^T\cosh(\sqrt{\Delta}(T-s)) K^N_s ds\right]
\end{align}
is a square-integrable martingale. By the martingale representation theorem, it therefore can be written as a stochastic integral with respect to the Brownian motion generating the underlying filtration. The integrand in this representation can be computed using the Clark-Ocone formula. Indeed, setting 
$$\Phi=\Delta \int_0^T\cosh(\sqrt{\Delta}(T-s))K^N_s ds$$
and using the Malliavin differentiability of $\Phi$ that we prove below, the Clark-Ocone formula~\cite[Proposition 1.3.14]{nualart2006malliavin} yields
$$ \Phi=\E\left[\Phi\right]+\int_0^T \E_t\left[D_t \Phi \right] dW_t.$$
By inserting this into \eqref{eq:1} and integrating by parts, we in turn obtain 
\begin{align}\label{eq:covaru}
\langle \bar u,K^N\rangle_T=\int_0^T\frac{\E_t\left[D_t\Phi\right]}{\cosh(\sqrt{\Delta}(T-t))}\sigma^N_t dt. 
\end{align}

We now show that we indeed have $\Phi \in \mathbb{D}^{1,2}$, so that the Clark-Ocone formula can be applied. Given our assumption on the square-integrability of the supremum of its Malliavin derivative, $K^N\in \L^{1,2,f}$, cf.~\cite[p.~45]{nualart2006malliavin}. Thus, by \cite[p.~45]{nualart2006malliavin}, $\Phi$ is Malliavin differentiable and it follows from the product rule that 
\begin{align}\label{eq:deriv}
&D_t \Phi=\Delta\int_t^T{\cosh(\sqrt{\Delta}(T-s))}D_tK^N_s ds.
\end{align}
We now expand $D_t\Phi$ and $\E_t[D_t\Phi]$ for $\lambda \to 0$ or, equivalently, $\Delta \to \infty$. First note that by~\eqref{eq:deriv} and the definition of the hyperbolic cosine,
\begin{align}\label{eq:cosh}
\frac{\sqrt{\Delta}^{-1}D_t\Phi}{\cosh(\sqrt{\Delta}(T-t))}&=\sqrt{\Delta} \int_t^T\frac{e^{\sqrt{\Delta}(t-s)}+e^{-\sqrt{\Delta}(2T-(s+t))}}{1+e^{-2\sqrt{\Delta}(T-t)}}D_tK^N_sds. 
\end{align}
By continuity of $s\mapsto D_t(K^N_s)$ on $[t,T]$, some elementary integrations show that the above expression converges to $D_tK^N_t$ as $\Delta\to \infty$. In view of~\cite[Proposition 1.3.8]{nualart2006malliavin}, we have $D_t(\int_0^t \sigma^N_s dW_s)=\sigma^N_t$. Moreover, $D_t(\int_0^t \mu^N_s ds)=0$, so that
$$
\frac{\sqrt{\Delta}^{-1}D_t \Phi}{\cosh(\sqrt{\Delta}(T-t))}\to \sigma_t^N, \quad \mbox{$P$-a.s.~as $\Delta\to \infty$}.
$$
Next, observe that for every $t\in [0,T]$, it follows from~\eqref{eq:cosh} that
\begin{align}
\sup_{\Delta>1} \left|\frac{\sqrt{\Delta}^{-1}D_t \Phi}{\cosh(\sqrt{\Delta}(T-t))}\right| &\leq  2\sup_{s\in [t,T]}| D_tK^N_s|   \sup_{\Delta>1}\left\{\sqrt{\Delta}\int_t^Te^{\sqrt{\Delta}(t-s)}ds\right\}\\
& \leq 2 \sup_{ t\leq s\leq T}| D_tK^N_s|.\label{eq:bound1}
\end{align}
Since the right-hand side is integrable by assumption, the dominated convergence theorem in turn shows 
$$\E_t \left[\frac{\sqrt{\Delta}^{-1}D_t\Phi}{\cosh(\sqrt{\Delta}(T-t))}\right]\to \sigma_t^N, \quad \mbox{$dP\times dt$-a.s.~as $\Delta\to \infty$.}$$
We now show that this expansion of $D_t\Phi$ is inherited by its conditional expectation and in turn the covariation~\eqref{eq:covaru}. To this end, we first use \eqref{eq:bound1} and Young's inequality to obtain that 
\begin{align*}
\sup_{\Delta>1}  |\sigma^N_t|\left|\E_t \left[\frac{\sqrt{\Delta}^{-1}D_t\Phi}{\cosh(\sqrt{\Delta}(T-t))}\right]\right|&\leq \frac{1}{3} \sup_{t\in[0,T]} |\sigma^N_t|^3+\frac{2}{3}\sup_{\Delta>1} \left|\E_t \left[\frac{\sqrt{\Delta}^{-1}D_t\Phi}{\cosh(\sqrt{\Delta}(T-t))}\right]\right|^{3/2}\\
&\leq \frac{1}{3} \sup_{t\in[0,T]} |\sigma^N_t|^3+\frac{\sqrt{32}}{3} \E_t \left[ \sup_{t\leq s\leq T}| D_tK^N_s|^{3/2}\right].
\end{align*}
Jensen's inequality and the integrability assumption for the supremum of the Malliavin derivative of ${K^N}$ yield 
$$\E\left[\int_0^T \E_t \left[ \textstyle{\sup_{t\leq s\leq T}| D_tK^N_s|^{3/2}}\right]^{4/3}dt\right]\leq   \E\left[\int_0^T \textstyle{\sup_{t\leq s\leq T}| D_tK^N_s|^{2}}dt\right]<\infty.$$
Moreover, $(\sup_{t\in[0,T]} |\sigma^N_t|^3)^{4/3}=\sup_{t\in[0,T]} |\sigma^N_t|^4$ is also integrable by assumption. Together, these two estimates show that
\begin{align}\label{eq:bound43}
\E\left[\int_0^T \left(\sup_{\Delta>1}  |\sigma^N_t|\left|\E_t \left[\frac{\sqrt{\Delta}^{-1}D_t \Phi}{\cosh(\sqrt{\Delta}(T-t))}\right]\right|\right)^{4/3}dt\right]<\infty.
\end{align}  
Since the term inside this expectation is finite, the dominated convergence theorem implies that, as $\lambda \to 0$ and in turn $\Delta \to \infty$,
$$\sqrt{\Delta}^{-1}\langle \bar u,K^N\rangle_T=\int_0^T\frac{\sqrt{\Delta}^{-1}\E_t[D_t\Phi]}{\cosh(\sqrt{\Delta}(T-t))}\sigma^N_t dt\to \int_0^T \left(\sigma^N_t\right)^2dt,\quad  \mbox{$P$-a.s.}$$
Finally, \eqref{eq:bound43} also shows that the $4/3$-th moment of $\int_0^T\frac{\sqrt{\Delta}^{-1}E_t[D_t\Phi]}{\cosh(\sqrt{\Delta}(T-t))}\sigma^N_t dt$ is bounded, uniformly for all $\Delta >1$. Therefore, this family indexed by $\Delta>1$ is uniformly integrable and the almost sure convergence for $\Delta \to \infty$ also holds in $L^1$. 

\medskip

To complete the proof, we now show that the other terms in~\eqref{eq:liqcosts} do not contribute at the leading order $O(\sqrt{\lambda})$, that is,
$$\lambda \int_0^T \bar u_t( \mu^N_tdt+\sigma^N_tdW_t)=o(\sqrt{\lambda}), \quad \mbox{in $L^1$ as $\lambda \to 0$.}$$
Since $\Delta=\frac{M}{\lambda\rho_d(M+1)}$, this is implied by a bound for ${\Delta}^{-1/2} u^\mathcal{K}$. To this end, observe that the inequalities of Jensen and Burkholder-Davis-Gundy show
\begin{align*}
\E\left[(K^N_t-K^N_s)^4\right] &\leq C\left( \E\left[\left(\int_s^t \mu^N_r dr\right)^4\right]+ \E\left[\left(\int_s^t \sigma^N_r dW_r\right)^4\right]\right)\\
&\leq C' \E\left[\sup_{0\leq u\leq T}\left\{|\mu_u^N|^4+|\sigma_u^N|^4\right\}\right](t-s)^2,
\end{align*}
for some constants $C,C'>0$ that might only depend on $T$. For $\a<\frac{1}{4}$, write $R_\a$ for the modulus of $\a$-H\"older continuity of $K^N$. This quantity is well defined and satisfies $E[R_\a^4]<\infty$ by~\cite[Theorem 3.1]{friz2014course}.\footnote{This theorem requires an additional assumption on the iterated integral of the process  $K^N$. However, a careful inspection of the proof reveals that this extra assumption is only needed to establish additional path regularity of the iterated integral and not for the path regularity of the process $K^N$ itself.} As $K^N \in \mathcal{H}^2$ by assumption, we can define the square-integrable random variable 
$$
M^\a:= \sup_{s\in[0,T]}\E_s[R_\a]\left(1+ \int_0^\infty {2 e^{-u}} |u|^\a ds\right)+\sup_{s\in[0,T]}\left|K^N_s\right| <\infty.
$$
Then, we can estimate
\begin{align*}
\left|{\Delta}^{-1/2}\bar K^N_t-K^N_t\right|&\leq \E_t\left[R_\a \int_t^T \Delta^{-1/2} k^\Delta(t,s)|t-s|^\a ds\right]\\
&\qquad+ \left|1-\int_t^T\Delta^{-1/2} k^\Delta(t,s)ds\right|\left|K^N_t\right|\\
&\leq  \E_t[R_\a]{\Delta ^{(1-\a)/2}} \int_t^{T} \frac{2 e^{\sqrt{\Delta} (T-s)}}{e^{\sqrt{\Delta} (T-t)}} |\sqrt{\Delta} (t-s)|^\a ds\\
&\qquad+\left|1-\tanh\left(\sqrt{\Delta}(T-t)\right)\right| |K^N_t|\\
&\leq  \E_t[R_\a]{\Delta ^{-\a/2}} \int_0^\infty {2 e^{-u}} |u|^\a ds+2e^{-2\sqrt{\Delta}(T-t)}\left|K^N_t\right|\\
&\leq  \left( \Delta ^{-\a/2}+2e^{-2\sqrt{\Delta}(T-t)}\right)M^\a=C_{t,T,\a,\lambda}.
\end{align*}
Together with the formulas for $\bar u,\bar U$ and the definition of the function $F^\Delta$ and \eqref{eq:30}, this estimate yields
\begin{align}
\left|{\Delta}^{-1/2}\bar u_t\right|&=\left|-{\Delta}^{-1/2}F^\Delta(t)\bar U_t+{\Delta}^{-1/2}\bar K_t^N\right| \notag\\
&=\left|-\sqrt{\Delta}\int_0^t\frac{\sinh(\sqrt{\Delta}(T-t))}{\cosh(\sqrt{\Delta}(T-s))}{\Delta}^{-1/2}\bar K^N_sds+{\Delta}^{-1/2}\bar K^N_t\right| \notag\\
&\leq\left|\sqrt{\Delta}\int_0^t\frac{\sinh(\sqrt{\Delta}(T-t))}{\cosh(\sqrt{\Delta}(T-s))}K^N_sds-K^N_t\right|\\
&\qquad+C_{t,T,\a,\lambda}+\sqrt{\Delta}\int_0^t\frac{\sinh(\sqrt{\Delta}(T-t))}{\cosh(\sqrt{\Delta}(T-s))}C_{s,T,\a,\lambda}ds \notag\\
&\leq \sqrt{\Delta}\int_0^t\frac{ \sinh(\sqrt{\Delta}(T-t))}{\cosh(\sqrt{\Delta}(T-s))}\left(R_\a|t-s|^\a+C_{s,T,\a,\lambda}\right) ds \notag\\
&\quad+C_{t,T,\a,\lambda}+\left|1-\sqrt{\Delta}\int_0^t\frac{\sinh(\sqrt{\Delta}(T-t))}{\cosh(\sqrt{\Delta}(T-s))} ds\right|\left|K^N_t\right| \notag\\
&\leq C_{t,T,\a,\lambda}+\Delta^{-\a/2 }R_\a\int_0^\infty e^{-u}  |u|^\a du+\sqrt{\Delta}\int_0^t e^{-\sqrt{\Delta}(t-s)}C_{s,T,\a,\lambda} ds \notag\\
&\quad+\left|1-\sqrt{\Delta}\int_0^t\frac{\sinh(\sqrt{\Delta}(T-t))}{\cosh(\sqrt{\Delta}(T-s))} ds\right|\left|K^N_t\right|. \label{eq:ests}
\end{align}
Recall the addition formula $\arctan\left(x\right)-\arctan\left(y\right)=\arctan\left(\frac{x-y}{1+xy}\right)$ for $x,y\geq 0$ and observe that $|\arctan(x)|\leq |x|$. As a consequence:
\begin{align*}
\sqrt{\Delta}\int_0^t\frac{\sinh(\sqrt{\Delta}(T-t))}{\cosh(\sqrt{\Delta}(T-s))} ds&={\sinh(\sqrt{\Delta}(T-t))}\left(\arctan\left(\sinh(\sqrt{\Delta}T)\right)-\arctan\left(\sinh(\sqrt{\Delta}(T-t))\right)\right)\\
&={\sinh(\sqrt{\Delta}(T-t))}\arctan\left(\frac{\sinh(\sqrt{\Delta}T)-\sinh(\sqrt{\Delta}(T-t))}{1+\sinh(\sqrt{\Delta}T)\sinh(\sqrt{\Delta}(T-t))}\right)\\
&\leq {\sinh(\sqrt{\Delta}(T-t))}\arctan\left(\frac{\sinh(\sqrt{\Delta}T)}{1+\sinh(\sqrt{\Delta}T)\sinh(\sqrt{\Delta}(T-t))}\right)\\
&\leq \frac{{\sinh(\sqrt{\Delta}(T-t))}\sinh(\sqrt{\Delta}T)}{1+\sinh(\sqrt{\Delta}T)\sinh(\sqrt{\Delta}(T-t))}\\
&\leq 1,\\
\end{align*}
as well as
\begin{align*}
\sqrt{\Delta}\int_0^t\frac{\sinh(\sqrt{\Delta}(T-t))}{\cosh(\sqrt{\Delta}(T-s))} ds&=\sqrt{\Delta}\int_0^t\frac{e^{-\sqrt{\Delta}(t-s)}-e^{-\sqrt{\Delta}(2T-s-t)}}{1+e^{-2\sqrt{\Delta}(T-s)}} ds\\
&\geq\sqrt{\Delta}\int_0^t\left({e^{-\sqrt{\Delta}(t-s)}-e^{-\sqrt{\Delta}(2T-s-t)}}\right)\left({1-e^{-2\sqrt{\Delta}(T-s)}} \right)ds\\
&\geq \sqrt{\Delta}\int_0^t e^{-\sqrt{\Delta}(t-s)}-e^{-\sqrt{\Delta}(2T-2t+(t-s))}-e^{-\sqrt{\Delta}(2T-2t+3(t-s))}ds\\
&\geq 1-e^{-\sqrt{\Delta}t}-2e^{-2\sqrt{\Delta}(T-t)}.
\end{align*}
In view of these two estimates, \eqref{eq:ests} yields
\begin{align}
\left|{\Delta}^{-1/2}\bar u_t\right| &\leq C_{t,T,\a,\lambda}+\Delta^{-\a/2 }R_\a\int_0^\infty e^{-u}  |u|^\a du+\sqrt{\Delta}\int_0^t e^{-\sqrt{\Delta}(t-s)}C_{s,T,\a,\lambda} ds \notag\\
&\quad+\left(e^{-\sqrt{\Delta}t}+2e^{-2\sqrt{\Delta}(T-t)}\right)\sup_{s\in[0,T]}|K^N_s| \notag\\
&\leq  4\left( e^{-\sqrt{\Delta}t}+\Delta ^{-\a/2}+e^{-2\sqrt{\Delta}(T-t)}\right)M^\a. \label{eq:estfinal}
\end{align}
(Here, the last inequality follows from the definition of $C_{\cdot,T,\a,\lambda}$.) In particular, there exists a constant $C_T>0$ only depending on $T$ such that 
\begin{align*}
\left|\int_0^T \Delta^{-1/2}\bar u_t\mu^N_tdt\right|&\leq C_T M^\a \sup_{t\in[0,T]} |\mu_t^N| (\Delta^{-\a/2}+\Delta^{-1/2})\\
&\leq \frac{C_T}{2} \left(|M^\a|^2 +\sup_{t\in[0,T]} |\mu_t^N|^2 \right)(\Delta^{-\a/2}+\Delta^{-1/2})\to 0,
\end{align*}
as $\lambda \to 0$ and in turn $\Delta \to \infty$. By the dominated convergence theorem, this pointwise convergence also holds in $L^1$, since the upper bound in this estimate is integrable under our assumptions. This shows that the Lebesgue integral in~\eqref{eq:liqcosts} is indeed of order $o(\sqrt{\lambda})$ as claimed.

The argument for the stochastic integral in~\eqref{eq:liqcosts} is similar. By the Burkholder-Davis-Gundy inequality, choosing $C_T>0$ larger if necessary, we obtain 
\begin{align*}
&\E\left[\left|\int_0^T \Delta^{-1/2}\bar u_t \sigma^N_t dW_t\right|\right]\leq C_T \E\left[\left(\int_0^T |\Delta^{-1/2}\bar u_t\sigma_t^N|^2 dt\right)^{1/2}\right]\\
&\leq 4 C_T \E\left[|M^\a|\sup_{t\in [0,T]}|\sigma_t^N|\left( \int_0^Te^{-\sqrt{\Delta}t}+\Delta ^{-\a/2}+e^{-2\sqrt{\Delta}(T-t)}dt\right)^{1/2}\right] \\
&\leq 2 C_T (\Delta^{-\a/2}+\Delta^{-1/2})^{1/2} \E\left[|M^\a|^2+\sup_{t\in [0,T]}|\sigma_t^N|^2 \right] \to 0,
\end{align*}
as $\lambda \to 0$ and in turn $\Delta \to \infty$. 
Here, we have used~\eqref{eq:estfinal} for the second inequality. Therefore, the stochastic integral in~\eqref{eq:liqcosts} also is of order $o(\sqrt{\lambda})$ in $L^1$, as $\lambda \to 0$ and the proof is complete.
\end{proof}

\bibliographystyle{abbrv}
\bibliography{bib}

\end{document}